\newcommand{\matr}[1]{\mathbf{#1}}
\theoremstyle{definition}
\newtheorem{theo}{\textbf{Theorem}}
\newtheorem{lem}{\textbf{Lemma}}
\newtheorem{cor}{\textbf{Corollary}}
\newtheorem{rem}{\textit{Remark}}
\newcommand{\Vast}{\bBigg@{5}}
\newcommand{\vast}{\bBigg@{3}}
\newcommand{\dB}{\,\mathrm{dB}}
\newcommand{\E}{\,\mathrm{E}}
\newcommand{\vect}{\,\mathrm{vec}}
\newcommand{\tr}{\,\mathrm{tr}}
\newcommand{\m}{\,\mathrm{m}}
\newcommand{\ddd}{\,\mathrm{d}}
\providecommand\add@text{}
\newcommand\tagaddtext[1]{%
  \gdef\add@text{#1\gdef\add@text{}}}%
\renewcommand\tagform@[1]{%
  \maketag@@@{\llap{\add@text\quad}(\ignorespaces#1\unskip\@@italiccorr)}%
}
\begin{document}

\title{Jamming-Robust Uplink Transmission for Spatially Correlated Massive MIMO Systems}

\author{\normalsize Hossein~Akhlaghpasand, Emil~Bj\"{o}rnson, and~S.~Mohammad~Razavizadeh
\thanks{H. Akhlaghpasand and S. M. Razavizadeh are with the School of Electrical Engineering, Iran University of Science and Technology, Tehran, Iran (e-mail: h{\_}akhlaghpasand@elec.iust.ac.ir, smrazavi@iust.ac.ir).}
\thanks{E. Bj\"{o}rnson is with the Department of Electrical Engineering, Link\"{o}ping University, Link\"{o}ping, Sweden (e-mail: emil.bjornson@liu.se). }}

\maketitle

\begin{abstract}
In this paper, we consider how the uplink transmission of a spatially correlated massive multiple-input multiple-output (MIMO) system can be protected from a jamming attack. To suppress the jamming, we propose a novel framework including a new optimal linear estimator in the training phase and a bilinear equalizer in the data phase. The proposed estimator is optimal in the sense of maximizing the spectral efficiency of the legitimate system attacked by a jammer, and its implementation needs the statistical knowledge about the jammer's channel. We derive an efficient algorithm to estimate the jamming information needed for implementation of the proposed framework. Furthermore, we demonstrate that optimized power allocation at the legitimate users can improve the performance of the proposed framework regardless of the jamming power optimization. Our proposed framework can be exploited to combat jamming in scenarios with either ideal or non-ideal hardware at the legitimate users and the jammer. Numerical results reveal that using the proposed framework, the jammer cannot dramatically affect the performance of the legitimate system.
\end{abstract}

\begin{IEEEkeywords}
Massive MIMO, spatial correlation, jamming suppression, hardware impairments.
\end{IEEEkeywords}

\section{Introduction}
Over the last decade, massive multiple-input multiple-output (MIMO) has received extensive interests and become the key physical-layer technology for the next generation wireless networks (5G) by providing high-speed data services to a large number of users \cite{BookMarzetta}$-$\cite{CovModEmil}. As the society becomes increasingly reliant on the availability of wireless connectivity, it is of critical importance that future communication networks become robust to jamming. Signal processing at the physical layer can provide a low-cost platform for design of the jamming-robust communication links in massive MIMO systems \cite{KapMag,Ashikh}.

The physical layer processing against jamming can be divided into two categories: Jamming detection and jamming suppression. For jamming detection, we exploit intentionally unused pilots in \cite{JamDetHoss} to propose a detector based on a generalized likelihood ratio test. Kapetanovi\'{c} \emph{et al.} utilize random training for detection of the pilot contamination attack \cite{KapRandomTrain}. In \cite{KapCoopScheme}, cooperative detection methods are proposed where the base station (BS) and a legitimate user cooperate to detect pilot contamination attack. Xie \emph{et al.} define a new frame structure with a two-stages training phase with random intervals for detecting the pilot spoofing attack \cite{TwoStageXie}.

There are some previous papers that provide the first steps towards a jamming-robust transmission \cite{JamMitDet}$-$\cite{AntiJamTransVec}. Vinogradova \emph{et al.} suggest an approximate minimum mean-squared error (MMSE) combiner based on random matrix theory to reject the jamming \cite{JamMitDet}. In \cite{JamResTai}, a receiver is proposed that exploits channel estimates of both legitimate user and jammer to improve the spectral efficiency (SE). In \cite{JamSupHoss}, we propose a framework including a novel MMSE based jamming suppression (MMSE-JS) estimator for channel training and a zero-forcing jamming suppression (ZFJS) detector for uplink combining.  Zhao \emph{et al.} suggest two anti-jamming algorithms based on cooperation between the transmitter and receiver with perfect channel state information \cite{AntiJamTransVec}.

An active eavesdropper can be viewed as a jammer that transmits disturbing signals only in the pilot phase for manipulating channel estimation in order to increase accuracy of its wiretap. Xiong \emph{et al.} in \cite{TwoWayXiong} maximize the secrecy rate in the presence of an active eavesdropper with a frame structure composing two training phases. Wang \emph{et al.} in \cite{SecTransWang} assign the beamforming coefficients such that the approximate secrecy rate is maximized, while in \cite{ArtificialNoiseWu}, the asymptotic secrecy rate is derived assuming matched filter beamforming and artificial noise generation at the BS. Optimal power control for preventing eavesdropping in the downlink transmission of the cell-free massive MIMO systems is proposed in \cite{SecureCellFreeNgo} for ideal hardware and in \cite{SecureCellFreeZhang} for non-ideal hardware. Previous works have either considered security problem in the presence of active eavesdropper \cite{TwoWayXiong}$-$\cite{SecureCellFreeZhang} or jamming suppression problem for only uncorrelated Rayleigh fading channels \cite{JamMitDet}$-$\cite{AntiJamTransVec}. In general, the jammer's strategy depends on its objective and its available information. Accordingly, in \cite{Ashikh}, \cite{JamDetHoss}, \cite{JamResTai}, \cite{JamSupHoss}, \cite{SecTransWang}, the authors suppose that the pilot jamming attack is a combination of several pilot sequences of the pilot set, while in \cite{KapMag}, \cite{KapRandomTrain}$-$\cite{JamMitDet}, \cite{TwoWayXiong}, \cite{ArtificialNoiseWu}$-$\cite{SecureCellFreeZhang}, the jammer selects a target user's pilot sequence as the jamming pilot.

In this paper, we propose a new framework for suppressing the jamming in the uplink transmission of the massive MIMO systems with spatially correlated channels. The utilization of spatial correlation for jamming suppression is a new dimension that has not been explored before in the literature. The contributions of this paper can be summarized as follows:

\begin{itemize}
  \item We introduce a new jamming-robust framework including a SE-maximizing linear estimator and a bilinear equalizer (BE). The proposed framework can perform well in the spatially correlated channels, which is more applicable in practice.
  \item We show that the performance of the proposed framework improves when the similarty between the statistical knowledges of the jammer's channel and the legitimate users' channels decreases. The other advantage of the proposed framework is acceptable performance in the high jamming power regime.
  \item We demonstrate that to implement the proposed framework in the first item, we need statistical knowledge about the jammer's channel. Thus, we derive an algorithm to estimate this knowledge by exploiting the received signals over some consecutive coherence blocks.
  \item In the proposed framework, we study how the legitimate user (jammer) can optimally allocate its power budget to the pilot and data phases to maximize (minimize) the performance of the legitimate system. For jamming suppression in the spatially correlated channels, the problem of power allocation has not been studied before in the literature.
  \item We also study the problem of jamming suppression with non-ideal transceiver hardware. To the best of our knowledge, this problem also has not been considered in prior works. The effect of the distortions from hardware impairments of the legitimate users and the jammer on the performance of the proposed framework is investigated.
\end{itemize}
Numerical results demonstrate that massive MIMO systems equipped with the proposed framework are more robust against jamming attacks than a framework consisting of the MMSE estimator and the zero-forcing (ZF) detector. The simulations also indicate that the proposed framework performs well even if the legitimate users and the jammer operate with non-ideal hardware.

\textit{Notations:} The conjugate, transpose, and conjugate transpose of an arbitrary matrix $\matr{X}$ are respectively denoted by $\matr{X}^*$, $\matr{X}^T$, and $\matr{X}^H$, while $\tr \left(\matr{X}\right)$ represents the trace of the matrix $\matr{X}$. The Kronecker product of two matrices $\matr{X}$ and $\matr{Y}$ is denoted by $\matr{X} \otimes \matr{Y}$. The operator $\vect \left(\matr{X}\right)$ yields the vector with the vectorization elements of the matrix $\matr{X}$. $\matr{I}_N$ is the $N \times N$ identity matrix and $\matr{0}$ is the all-zero column vector.

\section{Uplink Massive MIMO Setup}
We consider the uplink of a single-cell massive MIMO system, depicted in Fig. \ref{fig1}, containing one $M$-antenna BS and $K$ single-antenna legitimate users (hereafter called \emph{users}). We assume that a single-antenna jammer attacks a target user (e.g., the $m$th user) in order to reduce its SE. The model can be generalized to the case that the jammer attacks all users. We denote by $\matr{h}_{k} \in \mathbb{C}^{M \times 1},~k=1,\ldots,K$ the channel vector from the $k$th user to the BS, which is modeled by $\matr{h}_{k} \sim \mathcal{CN} \left(\matr{0} , \matr{R}_k \right)$, where $\matr{R}_k = \text{E}\{\matr{h}_k \matr{h}_k^H \} \in \mathbb{C}^{M \times M}$ is the positive definite channel covariance matrix of the $k$th user. Moreover, $\matr{h}_{w} \in \mathbb{C}^{M \times 1}$ denotes the channel vector from the jammer to the BS, where $\matr{h}_{w} \sim \mathcal{CN} \left(\matr{0} , \matr{R}_w \right)$ and $\matr{R}_w = \text{E}\{\matr{h}_w \matr{h}_w^H \} \in \mathbb{C}^{M \times M}$ is the positive definite channel covariance matrix of the jammer. We consider a block-fading model where the channels are fixed within a coherence block of $T$ samples and take independent realizations in each block.

The users transmit their mutually orthonormal pilots to the BS at $\tau$ ($K \leq \tau \leq T$) samples of a coherence block in order to perform the channel estimation. During the remaining $T-\tau$ samples, the users transmit data symbols to the BS. We denote by $p_t$ and $p_d$ the average transmit powers of the users over the pilot and data phases, respectively. To analyze the worst-case impact of the jamming, we assume that the jammer is smart and knows the transmission protocol. It uses different transmit powers $q_t$ and $q_d$ during jamming the pilot and data phases, respectively. These powers can be optimized to make the jamming as effective as possible in reducing the SE.

\subsection{Pilot Phase}
The pilots of the users are selected from an orthonormal pilot set $\left \{\boldsymbol{\phi}_1, \ldots, \boldsymbol{\phi}_{\tau} \right \}$ in which $\boldsymbol{\phi}_i \in \mathbb{C}^{\tau \times 1}$ is the $i$th pilot and

\begin{equation}
\boldsymbol{\phi}_i^{T} \boldsymbol{\phi}_l^{*}=
\begin{dcases*}
\begin{aligned}
1 , ~~ & l=i , \\
0 , ~~ & l \neq i .
\end{aligned}
\end{dcases*}
\end{equation}

We assume that a smart jammer is aware of the pilot sequence assigned to the target user (i.e., the $m$th user).

\begin{figure}
\centering
\includegraphics[width=3.55in]
{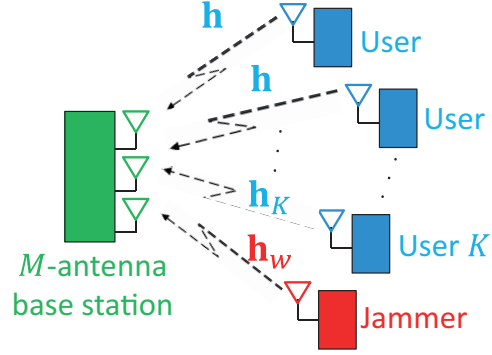}
\caption{Massive MIMO uplink scenario under a jamming attack.}
\label{fig1}
\end{figure}

\subsubsection{Pilot Jamming Attack}
The jammer selects the $m$th user's pilot sequence as the jamming pilot in order to reduce that user's channel estimation quality and thereby limit the SE of the user.

\subsubsection{Channel Estimation}
In the pilot phase, the users send their allocated pilots and the jammer transmits the $m$th user's pilot to create intentional pilot contamination. The received signal $\matr{Y\!}_t \in \mathbb{C}^{M \times \tau}$ at the BS is given by

\setcounter{equation}{1}
\begin{equation}\label{received_matrix}
\matr{Y\!}_t = \sqrt{\tau p_t} \sum_{i=1}^{K}\matr{h}_{i} \boldsymbol{\phi}^T_{i} + \sqrt{\tau q_t} \matr{h}_w \boldsymbol{\phi}^T_{m} + \matr{N}_t ,
\end{equation}
where $\matr{N}_t \in \mathbb{C}^{M \times \tau}$ is the normalized receiver noise matrix composed of i.i.d. $\mathcal{CN}(0,1)$ elements. The projection of $\matr{Y\!}_t$ on the $k$th user's pilot sequence is

\begin{equation}\label{proj_est_act}
\matr{y}_{k} = \matr{Y\!}_t \boldsymbol{\phi}_{k}^* = 
\begin{dcases*}
\begin{aligned}
&\sqrt{\tau p_t} \matr{h}_{m} + \sqrt{\tau q_t} \matr{h}_w + \matr{n}_{m} , ~ && k=m , \\
&\sqrt{\tau p_t} \matr{h}_{k} + \matr{n}_{k} , ~ && k \neq m ,
\end{aligned}
\end{dcases*}
\end{equation}
where $\matr{n}_{k} = \matr{N}_{t} \boldsymbol{\phi}_{k}^* \sim \mathcal{CN} \left( \matr{0} , \matr{I}_M \right)$. Any linear estimate of the channel $\matr{h}_k$ based on $\matr{y}_k$ is obtained as

\begin{equation}\label{leg_est}
\hat{\matr{h}}_{k} = \matr{A}_k \matr{y}_k ,
\end{equation}
where $\matr{A}_k \in \mathbb{C}^{M \times M}$ is a deterministic matrix dependent on the channel statistics. The estimate $\hat{\matr{h}}_{k}$ is a zero-mean random vector with covariance matrix $\matr{A}_k \matr{B}_{k} \matr{A}_k^H$, where $\matr{B}_{k}$ is the covariance matrix of $\matr{y}_k$:

\begin{equation}\label{cov_LS_est}
\matr{B}_k \triangleq \text{E} \left\{\matr{y}_k \matr{y}_k^H \right\} = 
\begin{dcases*}
\begin{aligned}
&\tau p_t \matr{R}_m + \matr{S}_w^{\left(t\right)} + \matr{I}_M , ~ && k=m , \\
&\tau p_t \matr{R}_k + \matr{I}_M , ~ && k \neq m ,
\end{aligned}
\end{dcases*}
\end{equation}
and $\matr{S}_w^{\left(t\right)} \triangleq \tau q_t \matr{R}_w$. Unlike the MMSE method \cite{BookEstKay} that determines $\matr{A}_k$ to minimize the mean-squared error (MSE), we will discuss later how to calculate $\matr{A}_k$ to suppress the interference caused by the jammer.

\subsection{Data Phase}
The normalized complex symbols $x_{i}$ and $x_w$ are simultaneously transmitted by the $i$th user and the jammer during the data phase. The received data signal $\matr{y}_d \in \mathbb{C}^{M \times 1}$ at the BS is given by

\begin{equation}
\label{rec_sig}
\matr{y}_d = \sqrt{p_d} \sum_{i=1}^{K} \matr{h}_{i} x_{i} + \sqrt{q_d} \matr{h}_w x_w + \matr{n}_d ,
\end{equation}
where $\matr{n}_d \in \mathbb{C}^{M \times 1}$ is the normalized receiver noise distributed as $\matr{n}_d \sim \mathcal{CN} (\matr{0} , \matr{I}_M )$. The covariance matrix of $\matr{y}_d$ is

\begin{equation}\label{cov_mat}
\matr{Q} \triangleq \E \left\{\matr{y}_d \matr{y}_d^H \right\} = p_d \sum_{i=1}^{K} \matr{R}_i + \matr{S}_w^{\left(d\right)} + \matr{I}_M ,
\end{equation}
where $\matr{S}_w^{\left(d\right)} \triangleq q_d \matr{R}_w$.
The BS filters the data signal $\matr{y}_d$ with a linear detector $\matr{v}_k$ to detect the signal of the $k$th user:

\begin{equation}
\label{det_sig_imperfect}
\hat{x}_k = \matr{v}_k^H \matr{y}_d = \sqrt{p_d} \sum_{i=1}^{K} \matr{v}_k^H \matr{h}_{i} x_{i} + \sqrt{q_d} \matr{v}_k^H \matr{h}_w x_w + \matr{v}_k^H \matr{n}_d .
\end{equation}
The BE based on the channel estimate in \eqref{leg_est}, $\matr{v}_k = \hat{\matr{h}}_k$, is a low-complexity detector that is known as a good match for massive MIMO systems \cite{Neumann}. Since BE provides a linear processing with respect to the data signal $\matr{y}_d$ and also the pilot signal $\matr{y}_k$ in \eqref{leg_est}, i.e., $\hat{x}_k = \matr{y}_k^H \matr{A}_k^H \matr{y}_d$, we can design the matrix $\matr{A}_k$ such that the SE of the $k$th user in the presence of jamming is maximized. An identical expression for \eqref{det_sig_imperfect} in the case of using the BE detector is

\begin{equation}
\label{det_sig_imperfect2}
\hat{x}_{k} = \sqrt{p_d} \text{E} \left\{\hat{\matr{h}}_k^H \matr{h}_{k} \right\} x_{k} + \hat{n}_{k} ,
\end{equation}
where $\hat{n}_{k}$ represents the interferences plus noise and is given by

\begin{equation}\label{noise_imperfect}
\hat{n}_{k} \triangleq \sqrt{p_d} \sum_{i=1}^{K} \hat{\matr{h}}_k^H \matr{h}_{i} x_{i} - \sqrt{p_d} \text{E} \left\{\hat{\matr{h}}_k^H \matr{h}_{k} \right\} x_{k} + \sqrt{q_d} \hat{\matr{h}}_k^H \matr{h}_w x_w + \hat{\matr{h}}_k^H \matr{n}_d .
\end{equation}
The desired signal from the $k$th user is decoded by treating $\hat{n}_{k}$ as noise \cite{BookMarzetta}, leading to the following achievable SE.

\begin{lem} \label{Lemma1}
Using the BE detector with correlated Rayleigh fading, an achievable SE of the $k$th user is obtained as

\begin{equation}
\label{erg_cap_imperfect}
\mathcal{S}_k = \left(1-\frac{\tau}{T}\right) \log_2 \left(1+{\rho}_{k} \right) ,
\end{equation}
where ${\rho}_{k}$ represents the effective signal-to-interference-plus-noise ratio (SINR) of the $k$th user and is given by

\begin{equation}\label{sjnr_imperfect}
{\rho}_{k} = 
\begin{dcases*}
\begin{aligned}
&\frac{\tau p_t p_d \left|\tr \left(\matr{A}_m^H \matr{R}_m \right) \right|^2}{\tau q_t q_d \left| \tr \left(\matr{A}_m^H \matr{R}_w \right) \right|^2 + \tr \left(\matr{A}_m^H \matr{Q} \matr{A}_m \matr{B}_m \right)} , ~ && k=m , \\
&\frac{\tau p_t p_d \left|\tr \left(\matr{A}_k^H \matr{R}_k \right) \right|^2}{\tr \left(\matr{A}_k^H \matr{Q} \matr{A}_k \matr{B}_k \right)} , ~ && k \neq m .
\end{aligned}
\end{dcases*}
\end{equation}
\end{lem}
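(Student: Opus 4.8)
The plan is to read \eqref{det_sig_imperfect2} as a deterministic-gain channel $\hat{x}_k = \mathrm{DS}_k\, x_k + \hat{n}_k$ with gain $\mathrm{DS}_k = \sqrt{p_d}\,\text{E}\{\hat{\matr{h}}_k^H \matr{h}_k\}$, and to invoke the standard use-and-then-forget (worst-case uncorrelated additive noise) bound of \cite{BookMarzetta}. The first step is to check that $\hat{n}_k$ in \eqref{noise_imperfect} is uncorrelated with $x_k$: since the symbols $\{x_i\}$, $x_w$ and the noise $\matr{n}_d$ are mutually independent, zero-mean, unit-power, and independent of all channels and of the pilot noise, the quantity $\text{E}\{\hat{n}_k x_k^*\}$ collapses to $\sqrt{p_d}\,\text{E}\{\hat{\matr{h}}_k^H \matr{h}_k\} - \sqrt{p_d}\,\text{E}\{\hat{\matr{h}}_k^H \matr{h}_k\} = 0$, the subtracted mean in \eqref{noise_imperfect} being exactly what enforces this. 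Because $\hat{n}_k$ is uncorrelated with the Gaussian input $x_k$, replacing it by independent Gaussian noise of the same variance cannot increase the mutual information, so $I(x_k;\hat{x}_k) \ge \log_2(1+\rho_k)$ with $\rho_k = |\mathrm{DS}_k|^2/\text{E}\{|\hat{n}_k|^2\}$. Multiplying by the fraction $(1-\tau/T)$ of samples used for data then yields \eqref{erg_cap_imperfect}, and it remains to evaluate numerator and denominator of $\rho_k$.

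For the numerator I would substitute $\hat{\matr{h}}_k = \matr{A}_k \matr{y}_k$ together with \eqref{proj_est_act}. Independence and zero-mean of $\matr{h}_m$, $\matr{h}_w$, $\matr{n}_k$ annihilate every cross term, leaving $\text{E}\{\hat{\matr{h}}_k^H \matr{h}_k\} = \sqrt{\tau p_t}\,\text{E}\{\matr{h}_k^H \matr{A}_k^H \matr{h}_k\} = \sqrt{\tau p_t}\,\tr(\matr{A}_k^H \matr{R}_k)$ in both cases. Squaring and scaling by $p_d$ gives the common numerator $\tau p_t p_d\,|\tr(\matr{A}_k^H \matr{R}_k)|^2$.

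The main computation is the effective noise power, for which I would use $\text{E}\{|\hat{n}_k|^2\} = \text{E}\{|\hat{x}_k|^2\} - |\mathrm{DS}_k|^2$ with $\text{E}\{|\hat{x}_k|^2\} = \text{E}\{\hat{\matr{h}}_k^H \matr{y}_d \matr{y}_d^H \hat{\matr{h}}_k\}$. Averaging first over the data symbols and $\matr{n}_d$, which are independent of $\hat{\matr{h}}_k$, replaces $\matr{y}_d \matr{y}_d^H$ by its conditional second moment $p_d \sum_i \matr{h}_i \matr{h}_i^H + q_d \matr{h}_w \matr{h}_w^H + \matr{I}_M$. The structural point is to separate the channels shared between the pilot projection $\matr{y}_k$ and the data signal from those that are not: for $k\neq m$ only $\matr{h}_k$ is shared, whereas for $k=m$ both $\matr{h}_m$ and the jammer channel $\matr{h}_w$ enter $\matr{y}_m$. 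For each channel independent of $\matr{y}_k$, the expectation factors and its contribution becomes $\tr(\matr{A}_k^H(\cdot)\matr{A}_k \matr{B}_k)$ after substituting $\text{E}\{\matr{y}_k \matr{y}_k^H\} = \matr{B}_k$ from \eqref{cov_LS_est}.

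The hard part is the coherent shared-channel contributions, which are genuine fourth-order Gaussian moments rather than plain covariances. I would evaluate them with the identity $\text{E}\{|\matr{h}^H \matr{M} \matr{h}|^2\} = |\tr(\matr{M}\matr{R})|^2 + \tr(\matr{M}\matr{R}\matr{M}^H\matr{R})$ for $\matr{h}\sim\mathcal{CN}(\matr{0},\matr{R})$, applied to $\matr{h}_k$ and, when $k=m$, also to $\matr{h}_w$ (the mixed $\matr{h}_m$--$\matr{h}_w$ cross terms vanish by independence and zero mean). The $|\tr(\cdot)|^2$ piece coming from $\matr{h}_k$ reproduces precisely $|\mathrm{DS}_k|^2$ and is removed by the subtraction, while the $\tr(\cdot)$ pieces merge with the incoherent contributions; collecting them through $p_d\matr{R}_k + (p_d\sum_{i\neq k}\matr{R}_i + q_d\matr{R}_w + \matr{I}_M) = \matr{Q}$ from \eqref{cov_mat} assembles exactly $\tr(\matr{A}_k^H \matr{Q} \matr{A}_k \matr{B}_k)$. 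For $k=m$ the jammer's coherent $|\tr(\cdot)|^2$ piece, $\tau q_t q_d\,|\tr(\matr{A}_m^H \matr{R}_w)|^2$, has no desired-signal counterpart to cancel it and therefore survives as the extra denominator term in \eqref{sjnr_imperfect}. Keeping this bookkeeping straight, so that the coherent jamming term appears only for the attacked user $m$ and the $\tr$ pieces telescope correctly into $\matr{Q}$ and $\matr{B}_k$, is the step most prone to error.
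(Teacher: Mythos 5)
Your proposal is correct and follows essentially the same route as the paper: the use-and-then-forget bound of \cite{BookMarzetta} gives the SINR form in \eqref{eff_sjnr_kth_user}, and your moment computations (the fourth-order Gaussian identity for the channels shared between $\matr{y}_k$ and $\matr{y}_d$, the factorization for the independent ones, and the assembly into $\matr{Q}$ and $\matr{B}_k$) reproduce exactly the closed-form expectations \eqref{Nominator}--\eqref{rec_noise} that the paper plugs in. The only cosmetic difference is that you write the denominator as $\E\{|\hat{x}_k|^2\}-|\mathrm{DS}_k|^2$ instead of listing the interference terms individually.
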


\begin{proof}
Please refer to Appendix \ref{proof of Lemma1}.
\end{proof}
\begin{rem}
\it
Comparing the effective SINRs in \eqref{sjnr_imperfect} for two cases $k=m$ and $k \neq m$, the jammer decreases the effective SINR of the $m$th user (i.e., the target user) more than the effective SINRs of the other users, since it creates the contamination only on the pilot of the target user.
\end{rem}
The first term in the denominator of the $m$th user's SINR in \eqref{sjnr_imperfect}, which is caused by the pilot contamination, is similar to the desired term in the numerator. Therefore, to reduce it without decreasing the desired term, the covariance matrices $\matr{R}_m$ and $\matr{R}_w$ should be linearly independent. In an uncorrelated fading channel, the covariance matrices are scaled identity matrices and the above independency is not satisfied. However, in spatially correlated fading channels, the covariance matrices can be linearly independent. Hence, utilization of the spatial correlation provides an opportunity for design of a processing to reduce the pilot contamination caused by the jammer.

\section{Robust Uplink Transmission Against Jamming Attack}
In this section, we discuss how to design the matrix $\matr{A}_k$ in \eqref{leg_est}, in order to reduce the jammer's effect in the SINR in \eqref{sjnr_imperfect}.

\subsection{Jamming-robust Channel Estimation}
The proposed approach for design of an optimal estimator is determining $\matr{A}_k$ such that the effective SINR in \eqref{sjnr_imperfect} is maximized. We call this approach as ``maximum SE (MS)''.
\begin{theo}\label{theo1}
The optimal vector $\matr{a}_k^{\star} \triangleq \vect (\matr{A}_k^{\star}) \in \mathbb{C}^{M^2 \times 1}$ which maximizes \eqref{sjnr_imperfect} can be obtained as
\setcounter{equation}{12}

\begin{equation}\label{opt_sjnr2}
\matr{a}_k^{\star} = 
\begin{dcases*}
\begin{aligned}
&\left(\matr{B}_{m}^T \otimes \matr{Q} + \matr{s}_w^{\left(t\right)} \matr{s}_w^{\left(d\right)^H} \right)^{-1} \matr{r}_m , ~ && k=m , \\
&\left(\matr{B}_{k}^T \otimes \matr{Q} \right)^{-1} \matr{r}_k , ~ && k \neq m ,
\end{aligned}
\end{dcases*}
\end{equation}
where $\matr{r}_k = \vect (\matr{R}_k) \in \mathbb{C}^{M^2 \times 1}$, $\matr{s}_w^{\left(t\right)} = \vect (\matr{S}_w^{\left(t\right)}) \in \mathbb{C}^{M^2 \times 1}$ and $\matr{s}_w^{\left(d\right)} = \vect (\matr{S}_w^{\left(d\right)}) \in \mathbb{C}^{M^2 \times 1}$.
\end{theo}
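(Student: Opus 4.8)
The plan is to recast the SINR in \eqref{sjnr_imperfect} as a generalized Rayleigh quotient in the vectorized variable $\matr{a}_k = \vect(\matr{A}_k)$ and then invoke the standard maximizer. First I would rewrite each trace in \eqref{sjnr_imperfect} through the two vectorization identities $\tr(\matr{X}^H \matr{Y}) = \vect(\matr{X})^H \vect(\matr{Y})$ and $\vect(\matr{X}\matr{Z}\matr{Y}) = (\matr{Y}^T \otimes \matr{X})\vect(\matr{Z})$. The numerator factor becomes $\tr(\matr{A}_k^H \matr{R}_k) = \matr{a}_k^H \matr{r}_k$, and the quadratic denominator term becomes $\tr(\matr{A}_k^H \matr{Q}\matr{A}_k \matr{B}_k) = \matr{a}_k^H (\matr{B}_k^T \otimes \matr{Q}) \matr{a}_k$, obtained by applying the product identity to $\matr{Q}\matr{A}_k\matr{B}_k$ and then the inner-product identity. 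For $k \neq m$ this already yields $\rho_k = \tau p_t p_d \,|\matr{a}_k^H \matr{r}_k|^2 / \bigl(\matr{a}_k^H (\matr{B}_k^T \otimes \matr{Q})\matr{a}_k\bigr)$, a ratio I recognize as a generalized Rayleigh quotient with Hermitian positive-definite weight $\matr{M}_k = \matr{B}_k^T \otimes \matr{Q}$; it is positive definite because $\matr{B}_k$ and $\matr{Q}$ are positive definite (so is $\matr{B}_k^T$, sharing the eigenvalues of $\matr{B}_k$), and a Kronecker product of positive-definite matrices is positive definite.

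For $k = m$ the extra obstacle is the jamming term $\tau q_t q_d\, |\tr(\matr{A}_m^H \matr{R}_w)|^2$. Writing $\matr{r}_w \triangleq \vect(\matr{R}_w)$, and using $\matr{s}_w^{(t)} = \tau q_t \matr{r}_w$ and $\matr{s}_w^{(d)} = q_d \matr{r}_w$, I would express $\tr(\matr{A}_m^H \matr{R}_w) = \matr{a}_m^H \matr{r}_w$ and then verify the key identity $\tau q_t q_d\, |\matr{a}_m^H \matr{r}_w|^2 = \matr{a}_m^H \bigl(\matr{s}_w^{(t)}\matr{s}_w^{(d)H}\bigr)\matr{a}_m$, since $\matr{s}_w^{(t)}\matr{s}_w^{(d)H} = \tau q_t q_d\, \matr{r}_w \matr{r}_w^H$. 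Because both $\matr{s}_w^{(t)}$ and $\matr{s}_w^{(d)}$ are scalar multiples of $\matr{r}_w$, this rank-one matrix is Hermitian and positive semidefinite, so the full denominator weight $\matr{M}_m = \matr{s}_w^{(t)}\matr{s}_w^{(d)H} + \matr{B}_m^T \otimes \matr{Q}$ remains Hermitian positive definite, hence invertible.

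Both cases then reduce to maximizing $|\matr{a}^H \matr{r}|^2 / (\matr{a}^H \matr{M}\matr{a})$ over $\matr{a} \neq \matr{0}$. I would substitute $\matr{b} = \matr{M}^{1/2}\matr{a}$, so that the quotient becomes $|\matr{b}^H \matr{M}^{-1/2}\matr{r}|^2 / \|\matr{b}\|^2$, and apply the Cauchy–Schwarz inequality to conclude that its maximum $\matr{r}^H \matr{M}^{-1}\matr{r}$ is attained exactly when $\matr{b} \propto \matr{M}^{-1/2}\matr{r}$, i.e. $\matr{a} \propto \matr{M}^{-1}\matr{r}$. Since the quotient is invariant under any nonzero scaling of $\matr{a}$, the representative $\matr{a}_k^{\star} = \matr{M}_k^{-1}\matr{r}_k$ is optimal, which is precisely \eqref{opt_sjnr2} for each of the two cases.

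The main conceptual step is the identification of the rank-one jamming contribution as the Hermitian quadratic form $\matr{a}_m^H \matr{s}_w^{(t)}\matr{s}_w^{(d)H}\matr{a}_m$, which is what lets the whole denominator be absorbed into a single positive-definite weight and thereby keeps the $k=m$ case within the same Rayleigh-quotient framework as $k\neq m$; the positive-definiteness check guaranteeing the required inverses exist is the only other point needing care. Everything else, including the two vectorization identities and the Cauchy–Schwarz maximization, is routine linear algebra.
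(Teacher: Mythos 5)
Your proposal is correct and follows essentially the same route as the paper: vectorize the traces via the Kronecker identity, recognize the jamming term as the rank-one Hermitian form $\matr{a}_m^H \matr{s}_w^{(t)}\matr{s}_w^{(d)^H}\matr{a}_m$, whiten with the square root of the denominator matrix, and apply Cauchy--Schwarz. Your explicit positive-definiteness check of the denominator weight is a small addition the paper leaves implicit, but the argument is otherwise identical.
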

\begin{proof}

Please refer to Appendix \ref{proof of Theorem1}.
\end{proof}
By using the optimal estimator, we can explicitly compute the optimal effective SINR as

\begin{equation}\label{opt_sjnr_r2}
\rho_k^{\star} = 
\begin{dcases*}
\begin{aligned}
&\tau p_t p_d \matr{r}_m^H \left(\matr{B}_{m}^T \otimes \matr{Q} + \matr{s}_w^{\left(t\right)} \matr{s}_w^{\left(d\right)^H} \right)^{-1} \matr{r}_m , ~ && k=m , \\
&\tau p_t p_d \matr{r}_k^H \left(\matr{B}_{k}^T \otimes \matr{Q} \right)^{-1} \matr{r}_k , ~ && k \neq m .
\end{aligned}
\end{dcases*}
\end{equation}

\subsection{Jamming Channel Statistics}\label{SecIII-B}
To implement the MS for the target user, we need to know the jamming channel statistics $\matr{S}_w^{\left(t\right)}$ and $\matr{S}_w^{\left(d\right)}$. To estimate $\matr{S}_w^{\left(t\right)}$, we can exploit the pilot signal $\matr{y}_m = \matr{Y}_t \boldsymbol{\phi}_m^*$ over $N$ coherence blocks denoted by $\matr{y}_m \left[1\right],\ldots, \matr{y}_m \left[N\right]$ for the $m$th pilot signal and obtained from the pilot already used for channel estimation.

Regarding \eqref{cov_LS_est}, $\matr{S}_w^{\left(t\right)}$ is obtained as

\begin{equation}\label{JamChSt}
\matr{S}_w^{\left(t\right)} = \matr{B}_m -\tau p_t \matr{R}_m-\matr{I}_M .
\end{equation}
To compute $\matr{S}_w^{\left(t\right)}$ in \eqref{JamChSt}, all variables except $\matr{B}_m$ are known or estimated before the jamming attack. Hence, we need to estimate $\matr{B}_m$ in order to obtain $\matr{S}_w^{\left(t\right)}$ which is estimated from the law of large numbers and the ergodicity of the channels:

\begin{equation}\label{EstVar_b_i}
\hat{\matr{B}}_m = \frac{1}{N} \sum_{n=1}^{N} \matr{y}_m \left[n\right] \left(\matr{y}_m \left[n\right] \right)^H .
\end{equation}
Finally, using \eqref{JamChSt} and \eqref{EstVar_b_i}, we can estimate $\matr{S}_w^{\left(t\right)}$ as

\begin{equation}\label{EstVar_s_w}
\hat{\matr{S}}_w^{\left(t\right)} =  \hat{\matr{B}}_m-\tau p_t \matr{R}_m-\matr{I}_M .
\end{equation}

In addition, to estimate $\matr{S}_w^{\left(d\right)}$, the BS knows the users' covariance matrices and their transmit powers in the data phase, thus from \eqref{cov_mat}, we have

\begin{equation}\label{Est_Rec_Jam_Pow1}
\hat{\matr{S}}_w^{\left(d\right)} = \hat{\matr{Q}}-p_d \sum_{i=1}^{K} \matr{R}_i-\matr{I}_M .
\end{equation}
The BS exploits the data signal $\matr{y}_d$ over $N$ coherence blocks to estimate $\hat{\matr{Q}}$ as

\begin{equation}\label{Est_B_d}
\hat{\matr{Q}} = \frac{1}{N} \sum_{n=1}^{N} \matr{y}_d \left[n\right] \left(\matr{y}_d \left[n\right] \right)^H .
\end{equation}
The standard deviation of the sample estimates in \eqref{EstVar_b_i} and \eqref{Est_B_d} is proportional to $1/\sqrt{N}$, therefore even with a relatively small $N$, we can get good estimates.

\begin{figure*}[!t]
\setcounter{equation}{33}
\begin{multline}\label{sjnr_imperfect_HardImpair}
\tilde{\rho}_{k} = \\
\begin{dcases*}
\begin{aligned}
&\frac{\tau p_tp_d\left(1-\kappa_u^2 \right)^2 \left| \tr \left(\matr{A}_m^H \matr{R}_m \right) \right|^2}{q_t q_d\left(\tau -\left(\tau -1\right)\kappa_w^2 \right) \left|\tr \left(\matr{A}_m^H \matr{R}_w \right) \right|^2 + p_tp_d\kappa_u^2 \sum_{i=1}^{K} \left|\tr \left(\matr{A}_m^H \matr{R}_i \right)\right|^2 + \tau p_tp_d\kappa_u^2 \left(1-\kappa_u^2 \right) \left|\tr \left(\matr{A}_m^H \matr{R}_m \right)\right|^2 + \tr \left(\matr{A}_m^H \matr{Q} \matr{A}_m \tilde{\matr{B}}_{m} \right)} , && k=m \\
&\frac{\tau p_tp_d\left(1-\kappa_u^2 \right)^2 \left| \tr \left(\matr{A}_k^H \matr{R}_k \right) \right|^2}{q_t q_d\kappa_w^2 \left|\tr \left(\matr{A}_k^H \matr{R}_w \right) \right|^2 + p_tp_d \kappa_u^2 \sum_{i=1}^{K} \left|\tr \left(\matr{A}_k^H \matr{R}_i \right)\right|^2 + \tau p_tp_d\kappa_u^2 \left(1-\kappa_u^2 \right) \left|\tr \left(\matr{A}_k^H \matr{R}_k \right)\right|^2 + \tr \left(\matr{A}_k^H \matr{Q} \matr{A}_k \tilde{\matr{B}}_{k} \right)} , && k \neq m
\end{aligned}
\end{dcases*}
\end{multline}
\hrulefill
\end{figure*}

\section{Optimal Power Allocation}
We consider two main questions as follows: From the system's point of view, how should each user allocate its power budget to the pilot and data phases in order to maximize effective SINR $\rho_m^{\star}$?, and from the jammer's point of view, how does the jammer allocate its power budget to the pilot and data phases in order to minimize $\rho_m^{\star}$?

The power budgets for each user and the jammer are denoted by $\mathcal{P}$ and $\mathcal{P}_w$ \cite{SubvertHess}, respectively, where the constraint on the transmit powers of each user is

\setcounter{equation}{19}
\begin{equation}\label{PowBudg_Eq}
\tau p_t+\left(T-\tau\right)p_d = \mathcal{P}T ,
\end{equation}
and the transmit powers of the jammer satisfy

\begin{equation}\label{PowBudg_Eq2}
\tau q_t+\left(T-\tau\right)q_d = \mathcal{P}_wT .
\end{equation}

\subsection{System's Standpoint}
From the system's point of view, the optimal values $p_t^{\text{o}}$ and $p_d^{\text{o}}$ are derived from the following optimization problem

\begin{equation}\label{optimization user0}
\mathfrak{L}:
\begin{dcases*}
\begin{aligned}
& \underset{p_t, p_d}{\text{maximize}}
&& \rho_m^{\star} \\
& \text{subject to} && \tau p_t+\left(T-\tau\right)p_d = \mathcal{P}T, \\
& && p_t \geq 0, p_d \geq 0, \\
\end{aligned}
\end{dcases*}
\end{equation}
where the jamming channel statistics $\matr{S}_w^{\left(t\right)}$ and $\matr{S}_w^{\left(d\right)}$ can be acquired using the sample estimates in Section \ref{SecIII-B}. We cannot generally show that the objective function in $\mathfrak{L}$ is concave, however the optimal values $p_t^{\text{o}}$ and $p_d^{\text{o}}$ can be found numerically. This should be taken into consideration that the relaxation methods, such as those proposed by \cite{SecureCellFreeNgo}, \cite{SecureCellFreeZhang} for the uncorrelated fading channels, are impractical in our setup with the correlated fading channels. Moreover, we can analytically obtain useful insights into $p_t^{\text{o}}$ and $p_d^{\text{o}}$ in one interesting scenario that is $\mathcal{P}_w \gg \mathcal{P} \geq 1$.

\begin{cor}\label{cor1}
In the case of $\mathcal{P}_w \gg \mathcal{P} \geq 1$, the solution for $\mathfrak{L}$ is given by

\begin{equation}\label{opt_sol_leg_sys}
\begin{dcases*}
\begin{aligned}
&p_t^{\text{o}}=\frac{\mathcal{P}T}{2\tau}, \\
&p_d^{\text{o}}=\frac{\mathcal{P}T}{2\left(T-\tau \right)}.
\end{aligned}
\end{dcases*}
\end{equation}
\end{cor}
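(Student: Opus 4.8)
The plan is to use the assumption $\mathcal{P}_w \gg \mathcal{P} \geq 1$ to decouple the dependence of $\rho_m^{\star}$ on the pair $(p_t,p_d)$ and thereby reduce $\mathfrak{L}$ to a one-line product maximization. First I would isolate where $(p_t,p_d)$ enters the optimal SINR in \eqref{opt_sjnr_r2}: explicitly through the prefactor $\tau p_t p_d$, and implicitly through $\matr{B}_{m} = \tau p_t \matr{R}_m + \matr{S}_w^{(t)} + \matr{I}_M$ (via $p_t$) and $\matr{Q} = p_d \sum_{i=1}^{K}\matr{R}_i + \matr{S}_w^{(d)} + \matr{I}_M$ (via $p_d$) inside the inverse. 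In the stated regime the jamming covariances $\matr{S}_w^{(t)}$ and $\matr{S}_w^{(d)}$, whose magnitudes scale with $\mathcal{P}_w$, dominate both the user-signal terms (since $\mathcal{P}_w \gg \mathcal{P}$) and the noise term $\matr{I}_M$ (since $\mathcal{P} \geq 1$ keeps the powers away from the noise floor), so that $\matr{B}_{m} \approx \matr{S}_w^{(t)}$ and $\matr{Q} \approx \matr{S}_w^{(d)}$.

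Consequently, the matrix inside the inverse tends to $\left(\matr{S}_w^{(t)}\right)^{T}\! \otimes \matr{S}_w^{(d)} + \matr{s}_w^{(t)} \matr{s}_w^{(d)^H}$, which depends only on the jamming statistics and is independent of $(p_t,p_d)$. Setting the positive constant $C \triangleq \matr{r}_m^H \left( \left(\matr{S}_w^{(t)}\right)^{T}\! \otimes \matr{S}_w^{(d)} + \matr{s}_w^{(t)} \matr{s}_w^{(d)^H} \right)^{-1} \matr{r}_m$, the objective collapses to $\rho_m^{\star} \approx \tau C\, p_t p_d$. Hence $\mathfrak{L}$ reduces to maximizing the product $p_t p_d$ subject to the linear budget constraint $\tau p_t + (T-\tau) p_d = \mathcal{P}T$ with $p_t, p_d \geq 0$.

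Next I would solve this reduced problem. Forming the Lagrangian $p_t p_d - \lambda\!\left(\tau p_t + (T-\tau)p_d - \mathcal{P}T\right)$ and setting the partial derivatives to zero gives $p_d = \lambda\tau$ and $p_t = \lambda(T-\tau)$, i.e., the two budget contributions are balanced, $\tau p_t = (T-\tau)p_d$. Substituting into the constraint forces each to equal $\mathcal{P}T/2$, yielding $p_t^{\text{o}} = \mathcal{P}T/(2\tau)$ and $p_d^{\text{o}} = \mathcal{P}T/\left(2(T-\tau)\right)$; the identical conclusion follows immediately from the AM-GM inequality applied to $\tau p_t$ and $(T-\tau)p_d$. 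This stationary point is the global maximum because, on the feasible segment, $p_t p_d$ is a strictly concave quadratic that vanishes at both endpoints $p_t=0$ and $p_d=0$.

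The step I expect to be the main obstacle is the first one: rigorously justifying that the inverse in \eqref{opt_sjnr_r2} becomes independent of $(p_t,p_d)$ in the limit. This amounts to bounding the residual perturbations $\tau p_t \matr{R}_m + \matr{I}_M$ in $\matr{B}_{m}$ and $p_d \sum_{i=1}^{K}\matr{R}_i + \matr{I}_M$ in $\matr{Q}$ relative to the dominant jamming terms, and then invoking continuity of the matrix inverse (or a first-order expansion) to show that the induced variation of $C$ with $(p_t,p_d)$ is of strictly lower order than the explicit factor $\tau p_t p_d$, so that it cannot shift the optimizer. The two hypotheses play complementary roles here: $\mathcal{P} \geq 1$ is precisely what makes the noise term negligible, while $\mathcal{P}_w \gg \mathcal{P}$ is what makes the jamming term dominate the user-signal term.
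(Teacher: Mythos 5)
Your proposal matches the paper's own argument: the paper likewise approximates $\matr{B}_{m}^{T}\otimes\matr{Q}$ by $\matr{S}_{w}^{\left(t\right)^{T}}\otimes\matr{S}_{w}^{\left(d\right)}$ in the regime $\mathcal{P}_w\gg\mathcal{P}\geq 1$ so that $\rho_m^{\star}$ becomes linear in the product $p_tp_d$, and then maximizes $p_tp_d$ under the budget constraint (by substitution and differentiation rather than your Lagrangian/AM--GM, a cosmetic difference). Your closing remark about rigorously controlling the perturbation of the inverse is a fair observation, but the paper does not carry out that bound either; it simply states the approximation is exact as $\mathcal{P}_w\rightarrow\infty$.
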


\begin{proof}
Please refer to Appendix \ref{proof of Corollary1}.
\end{proof}

For a given coherence interval $T$ and the power budget $\mathcal{P}$, the optimal transmit powers for each user depend on the pilot length $\tau$, so that

\begin{equation}\label{Res_Pow_Alloc}
\tau p_t^{\text{o}} = \left(T-\tau\right)p_d^{\text{o}} = \frac{\mathcal{P}T}{2} .
\end{equation}

This means that the optimal power allocation is equally dividing the power budget between the pilot and data phases against a high power jammer.

\subsection{Jammer's Standpoint}
Now, we consider the power allocation from the jammer's point of view. The smart jammer can acquire the channel statistics, $p_t$ and $p_d$ to impose the maximum loss to the SE of the target user. In this case, the optimization problem is

\begin{equation}\label{optimization user1}
\mathfrak{J}:
\begin{dcases*}
\begin{aligned}
& \underset{q_t, q_d}{\text{minimize}}
&& \rho_m^{\star} \\
& \text{subject to} && \tau q_t+\left(T-\tau\right)q_d = \mathcal{P}_wT, \\
& && q_t \geq 0, q_d \geq 0. \\
\end{aligned}
\end{dcases*}
\end{equation}

We use the next theorem to efficiently solve $\mathfrak{J}$.
\begin{theo}\label{theo3}
$\mathfrak{J}$ is a convex optimization problem.
\end{theo}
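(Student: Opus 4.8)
The plan is to use the linear power constraint to eliminate one variable and then recognize the objective as a scalar matrix-fractional function whose convexity follows from a Loewner-monotonicity composition argument. First I would observe that the feasible set of $\mathfrak{J}$ is the intersection of the hyperplane $\tau q_t+(T-\tau)q_d=\mathcal{P}_wT$ with the nonnegative orthant, hence a line segment and in particular convex; it therefore suffices to prove that the objective is convex along it. Writing $\matr{M}\triangleq\matr{B}_{m}^T\otimes\matr{Q}+\matr{s}_w^{\left(t\right)}\matr{s}_w^{\left(d\right)^H}$, the objective in \eqref{opt_sjnr_r2} is $\rho_m^{\star}=\tau p_t p_d\,\matr{r}_m^H\matr{M}^{-1}\matr{r}_m$ with the prefactor $\tau p_t p_d>0$ fixed. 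Using $\matr{s}_w^{\left(t\right)}=\tau q_t\matr{r}_w$ and $\matr{s}_w^{\left(d\right)}=q_d\matr{r}_w$ with $\matr{r}_w\triangleq\vect(\matr{R}_w)$, the rank-one term simplifies to $\matr{s}_w^{\left(t\right)}\matr{s}_w^{\left(d\right)^H}=\tau q_t q_d\,\matr{r}_w\matr{r}_w^H\succeq 0$, and since $\matr{B}_m,\matr{Q}\succ 0$ we have $\matr{M}\succ 0$ at every feasible point, so $\rho_m^{\star}$ is well defined there.

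Second, I would invoke two standard properties of the scalar map $\phi(\matr{M})\triangleq\matr{r}_m^H\matr{M}^{-1}\matr{r}_m$ on the positive-definite cone: it is convex (a matrix-fractional function), and it is nonincreasing in the Loewner order, i.e. $\matr{M}_1\succeq\matr{M}_2\succ 0$ implies $\phi(\matr{M}_1)\le\phi(\matr{M}_2)$. The structural fact I would rely on is the matrix analogue of ``a convex nonincreasing function composed with a concave function is convex'': if $\matr{M}(x)$ is matrix-concave and positive definite along the segment, then $x\mapsto\phi(\matr{M}(x))$ is convex. This follows in one line, since matrix concavity gives $\matr{M}(\lambda x+(1-\lambda)y)\succeq\lambda\matr{M}(x)+(1-\lambda)\matr{M}(y)$, monotonicity of $\phi$ turns this into $\phi(\matr{M}(\lambda x+(1-\lambda)y))\le\phi(\lambda\matr{M}(x)+(1-\lambda)\matr{M}(y))$, and convexity of $\phi$ bounds the right-hand side by $\lambda\phi(\matr{M}(x))+(1-\lambda)\phi(\matr{M}(y))$.

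Third, it remains to verify matrix concavity of $\matr{M}$ along the feasible segment. Parametrizing the segment by $x=q_t$ and substituting $q_d=(\mathcal{P}_wT-\tau x)/(T-\tau)$, both $\matr{B}_m^T$ and $\matr{Q}$ become affine in $x$, while the product $q_t q_d$ becomes the concave quadratic $x(\mathcal{P}_wT-\tau x)/(T-\tau)$. Hence $\matr{M}(x)=\matr{M}_0+x\matr{M}_1+x^2\matr{M}_2$ is a quadratic matrix polynomial, and such a polynomial is matrix-concave exactly when its leading coefficient satisfies $\matr{M}_2\preceq 0$. Collecting the $x^2$ terms yields $\matr{M}_2=-\frac{\tau^2}{T-\tau}\big(\matr{R}_w^T\otimes\matr{R}_w+\matr{r}_w\matr{r}_w^H\big)$, which is negative semidefinite because $\matr{R}_w^T\otimes\matr{R}_w\succ 0$ and $\matr{r}_w\matr{r}_w^H\succeq 0$. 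Combining this with the composition fact shows that $\rho_m^{\star}$ is convex on the segment, and together with the affine equality and nonnegativity constraints this establishes that $\mathfrak{J}$ is a convex program.

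I expect the main obstacle to be the bilinear coupling $q_t q_d$ inside $\matr{M}$: on the full two-dimensional domain $(q_t,q_d)$ the map $\matr{M}$ is \emph{not} matrix-concave, since $(q_t,q_d)\mapsto q_t q_d$ is a saddle rather than a concave function, so the composition argument cannot be applied directly. The resolution is precisely the substitution of the power constraint, which restricts $q_t q_d$ to a one-dimensional concave quadratic and thereby forces the leading matrix coefficient $\matr{M}_2$ to be negative semidefinite. Some care is also needed to confirm that $\matr{M}(x)\succ 0$ holds on the entire segment, so that $\phi$ and its Loewner monotonicity are valid throughout.
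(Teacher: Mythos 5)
Your proof is correct, and it reaches the same structural conclusion as the paper by a genuinely different route. The paper parametrizes the feasible segment by the power fraction $\zeta$ and verifies the second-order condition directly: it computes $\partial^2\rho_m^{\star}/\partial\zeta^2=\tau p_tp_d\,\matr{r}_m^H\boldsymbol{\Gamma}_m^{-1}\bigl(2\boldsymbol{\Lambda}_m\boldsymbol{\Gamma}_m^{-1}\boldsymbol{\Lambda}_m-\boldsymbol{\Upsilon}_m\bigr)\boldsymbol{\Gamma}_m^{-1}\matr{r}_m$ and shows this is positive because $\boldsymbol{\Gamma}_m\succ0$ and the second derivative of the inner matrix, $\boldsymbol{\Upsilon}_m=-\tfrac{2\mathcal{P}_w^2T^2}{T-\tau}\bigl(\matr{R}_w^T\otimes\matr{R}_w+\matr{r}_w\matr{r}_w^H\bigr)$, is negative definite. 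You avoid the Hessian computation entirely by packaging the same facts through standard convexity calculus: the matrix-fractional map $\matr{M}\mapsto\matr{r}_m^H\matr{M}^{-1}\matr{r}_m$ is convex and Loewner-antitone on the positive-definite cone, and composing it with a matrix-concave argument yields a convex function. The key observation is identical in both proofs --- after substituting the budget constraint, the only non-affine dependence is through $q_tq_d$, which becomes a concave scalar quadratic, so the leading matrix coefficient is $-\tfrac{\tau^2}{T-\tau}\bigl(\matr{R}_w^T\otimes\matr{R}_w+\matr{r}_w\matr{r}_w^H\bigr)\preceq0$ (your $\matr{M}_2$ is exactly the paper's $\boldsymbol{\Upsilon}_m/2$ up to the reparametrization $\zeta=\tau q_t/(\mathcal{P}_wT)$). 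Your version is more modular and makes explicit a point the paper leaves implicit, namely that the bilinear term $q_tq_d$ is a saddle on the full $(q_t,q_d)$ plane and only becomes concave after restriction to the budget line; the paper's version buys a slightly stronger conclusion for free, namely strict convexity of the objective via the strictly positive second derivative. Both correctly use the fact that $\matr{s}_w^{(t)}$ and $\matr{s}_w^{(d)}$ are parallel (both proportional to $\vect(\matr{R}_w)$), without which the rank-one term would not even be Hermitian.
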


\begin{proof}
Please refer to Appendix \ref{proof of Theorem3}.
\end{proof}

We conclude that the optimal values $q_t^{\text{o}}$ and $q_d^{\text{o}}$ can be found by any convex optimization tool.

\begin{cor}\label{cor2}
When the power budget of the jammer is high (i.e., $\mathcal{P}_w \gg \mathcal{P} \geq 1$), the solution for $\mathfrak{J}$ is given by
\begin{equation}\label{opt_sol_jam}
\begin{dcases*}
\begin{aligned}
&q_t^{\text{o}}=\frac{\mathcal{P}_wT}{2\tau}, \\
&q_d^{\text{o}}=\frac{\mathcal{P}_wT}{2\left(T-\tau \right)}.
\end{aligned}
\end{dcases*}
\end{equation}
\end{cor}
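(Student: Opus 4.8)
The plan is to exploit the high-jamming-power regime $\mathcal{P}_w \gg \mathcal{P} \geq 1$ to reduce the optimal SINR $\rho_m^{\star}$ in \eqref{opt_sjnr_r2} (for the case $k=m$) to an explicit function of the single product $q_t q_d$, and then optimize that scalar function under the linear power constraint \eqref{PowBudg_Eq2}. The argument runs entirely in parallel with the proof of Corollary \ref{cor1}: there the jamming terms act as a fixed dominant background while the user powers are optimized, whereas here the user powers play that background role while the jammer powers $q_t,q_d$ are the free variables.

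First I would substitute the jamming-dominant forms into the three blocks making up the matrix $\matr{B}_m^T \otimes \matr{Q} + \matr{s}_w^{\left(t\right)} \matr{s}_w^{\left(d\right)^H}$. Since $q_t,q_d$ scale as $\mathcal{P}_w$ while $p_t,p_d$ are $O(\mathcal{P})$, in this regime $\matr{B}_m \approx \tau q_t \matr{R}_w$ and $\matr{Q} \approx q_d \matr{R}_w$, so that $\matr{B}_m^T \otimes \matr{Q} \approx \tau q_t q_d\,(\matr{R}_w^T \otimes \matr{R}_w)$. Recalling that $\matr{s}_w^{\left(t\right)} = \tau q_t \matr{r}_w$ and $\matr{s}_w^{\left(d\right)} = q_d \matr{r}_w$, the rank-one term equals exactly $\tau q_t q_d\, \matr{r}_w \matr{r}_w^H$. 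Both pieces carry the same factor $\tau q_t q_d$, so
\[
\matr{B}_m^T \otimes \matr{Q} + \matr{s}_w^{\left(t\right)} \matr{s}_w^{\left(d\right)^H} \approx \tau q_t q_d\, \matr{M}_0 , \quad \matr{M}_0 \triangleq \matr{R}_w^T \otimes \matr{R}_w + \matr{r}_w \matr{r}_w^H .
\]
The key point is that $\matr{M}_0$ is a constant, positive definite matrix independent of $q_t,q_d$ (a Kronecker product of two positive definite matrices is positive definite, and the added term is positive semidefinite). Inverting and inserting into \eqref{opt_sjnr_r2} then gives $\rho_m^{\star} \approx (p_t p_d/(q_t q_d))\,C$ with $C \triangleq \matr{r}_m^H \matr{M}_0^{-1} \matr{r}_m > 0$ a positive constant.

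Because the prefactor $p_t p_d C$ is independent of the jammer's variables, minimizing $\rho_m^{\star}$ is equivalent to maximizing $q_t q_d$ subject to $\tau q_t + (T-\tau)q_d = \mathcal{P}_w T$ and $q_t,q_d \geq 0$. Applying the AM-GM inequality to $\tau q_t$ and $(T-\tau)q_d$ shows that $q_t q_d$ is maximized when the two contributions are equal, i.e. $\tau q_t = (T-\tau)q_d = \mathcal{P}_w T/2$, which is exactly the claimed solution \eqref{opt_sol_jam}; alternatively one sets up the Lagrangian and solves the stationarity conditions to reach the same balanced point.

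The main obstacle will be justifying that the balanced interior point is the true minimizer rather than a boundary point, and that replacing the matrix by $\tau q_t q_d \matr{M}_0$ is legitimate. I would settle this by examining the scaling of $\rho_m^{\star}$ across the feasible segment: at the balanced point $\rho_m^{\star} = \Theta(1/\mathcal{P}_w^2)$, whereas if either $q_t \to 0$ or $q_d \to 0$ the corresponding jamming term drops out of $\matr{B}_m$ or $\matr{Q}$, the rank-one term vanishes, and $\rho_m^{\star}$ decays only as $\Theta(1/\mathcal{P}_w)$, which is strictly larger for large $\mathcal{P}_w$. Hence the jammer strictly prefers the interior, where both $q_t$ and $q_d$ grow proportionally to $\mathcal{P}_w$ so that the omitted $O(\mathcal{P})$ and $O(1)$ terms are of lower order and the approximation is valid. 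Finally, the convexity of $\mathfrak{J}$ established in Theorem \ref{theo3} (equivalently, convexity of $1/(q_t q_d)$ on the positive orthant) certifies that this stationary point is the global minimum, completing the argument.
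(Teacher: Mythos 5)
Your proposal is correct and follows essentially the same route as the paper: in the regime $\mathcal{P}_w \gg \mathcal{P}$ the matrix in \eqref{opt_sjnr_r2} is approximated by $\tau q_t q_d\left(\matr{R}_w^T \otimes \matr{R}_w + \matr{r}_w\matr{r}_w^H\right)$, so $\rho_m^{\star}$ reduces to a positive constant times $p_t p_d/(q_t q_d)$, and minimizing it is equivalent to maximizing $q_t q_d$ under the linear budget constraint, which yields the balanced split. The only cosmetic differences are that you invoke AM--GM where the paper substitutes the constraint and differentiates (as in Corollary \ref{cor1}), and you add a boundary/scaling justification the paper omits.
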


\begin{proof}
Please refer to Appendix \ref{proof of Corollary2}.
\end{proof}
The jammer assigns equal powers to both the pilot and data phases to minimize the SE of the target user, when the jamming power budget is high.

\subsection{Computational Complexity}\label{Compelxity}
The optimization problem $\mathfrak{L}$ relies on the channel statistics, thus it only needs to be solved once for a given setup. To find the optimal values numerically for one setup, we denote the ratio of the power budget that each user allocates to the pilot phase by $\phi \in \left(0,1 \right)$, thus we have $p_t=\phi \mathcal{P}T/\tau$ and $p_d=(1-\phi) \mathcal{P}T/(T-\tau)$ which satisfy the constraints in $\mathfrak{L}$. We should compute $\rho_m^{\star}$ just for a few values of the variable $\phi$ (e.g., $99$ values with equal step size $0.01$) in the interval $\left(0,1 \right)$, and then compare the results to find their maximum value. For each value of $\phi$, we compute the inversion of a matrix with dimensions $M^2 \times M^2$  to obtain $\rho_{m}^{\star}$ according to \eqref{opt_sjnr_r2}, subsequently the complexity is of the order $M^6$. To reduce the complexity, we can rewrite \eqref{opt_sjnr_r2} as

\begin{equation}\label{opt_sjnr_rr22}
\rho_m^{\star} = \tau p_t p_d \Bigg(\tr \left(\matr{R}_m^H\matr{Q}^{-1}\matr{R}_m\matr{B}_{m}^{-1} \right) - 
\frac{\tr \left(\matr{R}_m^H\matr{Q}^{-1}\matr{S}_w^{\left(t\right)}\matr{B}_{m}^{-1} \right) \times \tr \left(\matr{S}_w^{\left(d\right)^H}\matr{Q}^{-1}\matr{R}_m\matr{B}_{m}^{-1} \right)}{1+\tr \left(\matr{S}_w^{\left(d\right)^H}\matr{Q}^{-1}\matr{S}_w^{\left(t\right)}\matr{B}_{m}^{-1} \right)} \Bigg) ,
\end{equation}
following the matrix inversion lemma \cite{CovModEmil} and the equality \eqref{vectorized_property} in Appendix. Utilizing \eqref{opt_sjnr_rr22}, the inversion is computed for the matrices with dimensions $M \times M$ and consequently the order of the complexity is reduced to $M^3$. Following a similar approach, for the problem $\mathfrak{J}$ in \eqref{optimization user1}, the complexity is obtained of the order $M^3$.

\section{Jamming-Robust Massive MIMO Uplink With Hardware Impairments}
The previous sections considered robustness towards jamming for a massive MIMO system with ideal transceiver hardware. However, practical transceivers are equipped with non-ideal hardware that creates distortion in the system. In this section, we investigate how the distortions from hardware impairments at the users and the jammer affect the proposed method for combating jamming.

\subsection{Pilot Phase Under Hardware Impairments}
The distortion caused by the hardware impairments at a user/jammer is modeled by a reduction of the signal amplitude related to the impairment level and the additive Gaussian distortion with a power that equals to the reduction in the signal powers \cite{SecureCellFreeZhang}, \cite{EmilNonIdealHard2018} as

\begin{equation}\label{received_matrix_HarImpair}
\matr{Y\!}_t = \sum_{i=1}^{K}\matr{h}_{i} \left(\sqrt{\tau p_t \left(1-\kappa_u^2 \right)} \boldsymbol{\phi}^T_{i} + \boldsymbol{\eta}^T_{i} \right) \\
+ \matr{h}_w \left(\sqrt{\tau q_t \left(1-\kappa_w^2 \right)} \boldsymbol{\phi}^T_{m} + \boldsymbol{\eta}^T_{w} \right) + \matr{N}_t ,
\end{equation}
where $\kappa_u$ and $\kappa_w$ represent the impairment levels at the users and the jammer respectively that are in the ranges of $0 \leq \kappa_u \leq 1$ and $0 \leq \kappa_w \leq 1$. It is assumed that the distortion terms $\boldsymbol{\eta}_i$ and $\boldsymbol{\eta}_w$ are independently distributed as

\begin{equation}\label{DistNoiseDist}
\boldsymbol{\eta}_i \sim \mathcal{CN}\left(\matr{0}, p_t \kappa_u^2 \matr{I}_{\tau} \right) ~ \text{and} ~ \boldsymbol{\eta}_w \sim \mathcal{CN}\left(\matr{0}, q_t \kappa_w^2 \matr{I}_{\tau} \right) .
\end{equation}
In the system with non-ideal hardware, the channel estimation is more complicated due to the distortion terms. A basic estimate of $\matr{h}_k$ can be obtained from

\begin{equation}\label{proj_est_hardware_imp}
\matr{y}_{k} = \matr{Y\!}_t \boldsymbol{\phi}_{k}^* = \\
\begin{dcases*}
\begin{aligned}
&\sqrt{\tau p_t \left(1-\kappa_u^2 \right)} \matr{h}_{m} + \sum_{i=1}^{K} \varepsilon_{i}^{m} \matr{h}_{i} + \\
&~~~~~~~~~~~\sqrt{\tau q_t \left(1-\kappa_w^2 \right)} \matr{h}_w + \varepsilon_{w}^{m} \matr{h}_{w} + \matr{n}_{m} , ~ && k=m , \\
&\sqrt{\tau p_t \left(1-\kappa_u^2 \right)} \matr{h}_{k} + \sum_{i=1}^{K} \varepsilon_{i}^{k} \matr{h}_{i} + \varepsilon_{w}^{k} \matr{h}_{w} + \matr{n}_{k} , ~ && k \neq m ,
\end{aligned}
\end{dcases*}
\end{equation}
where $\varepsilon_{i}^{k} \triangleq \boldsymbol{\eta}^T_{i} \boldsymbol{\phi}_{k}^* \sim \mathcal{CN} (0, p_t \kappa_u^2)$ and $\varepsilon_{w}^{k} \triangleq \boldsymbol{\eta}^T_{w} \boldsymbol{\phi}_{k}^* \sim \mathcal{CN} (0, q_t \kappa_w^2)$ that are independent of the channels. We exploit $\hat{\matr{h}}_{k} = \matr{A}_k \matr{y}_k$ as a linear estimate of the $k$th user's channel. The covariance matrix of $\matr{y}_k$ in \eqref{proj_est_hardware_imp} is given by

\begin{equation}\label{cov_est_hardware_imp}
\tilde{\matr{B}}_k = \text{E} \left\{\matr{y}_k \matr{y}_k^H \right\} = \\
\begin{dcases*}
\begin{aligned}
&\tau p_t \left(1-\kappa_u^2 \right) \matr{R}_m + \\ &~~~~~~~~ p_t \kappa_u^2 \sum_{i=1}^{K} \matr{R}_i + \tilde{\matr{S}}^{\left(t\right)}_w + \matr{I}_M , ~ && k=m , \\
&\tau p_t \left(1-\kappa_u^2 \right) \matr{R}_k + \\ &~~~~~~~~ p_t \kappa_u^2 \sum_{i=1}^{K} \matr{R}_i + \breve{\matr{S}}^{\left(t\right)}_w + \matr{I}_M , ~ && k \neq m ,
\end{aligned}
\end{dcases*}
\end{equation}
where $\tilde{\matr{S}}^{\left(t\right)}_w \triangleq q_t \left(\tau -\left(\tau -1\right) \kappa_w^2 \right) \matr{R}_w$ and $\breve{\matr{S}}^{\left(t\right)}_w \triangleq q_t \kappa_w^2 \matr{R}_w$.

\subsection{Robust Data Phase Under Hardware Impairments}
The received signal at the BS during the data phase for the case of hardware impairments at the users and the jammer is modeled by

\begin{equation}\label{rec_sig_HardImpair}
\matr{y}_d = \sum_{i=1}^{K} \matr{h}_{i} \left(\sqrt{p_d \left(1-\kappa_u^2 \right)} x_{i} + \eta_i \right) \\
+ \matr{h}_w \left(\sqrt{q_d \left(1-\kappa_w^2 \right)} x_{w} + \eta_w \right) + \matr{n}_d .
\end{equation}
\begin{lem} \label{Lemma2}
Under hardware impairments, an achievable SE of the $k$th user in \eqref{erg_cap_imperfect} becomes

\begin{equation}
\label{erg_cap_imperfect_HardImpair}
\mathcal{S}_k = \left(1-\frac{\tau}{T}\right) \log_2 \left(1+\tilde{\rho}_{k} \right) ,
\end{equation}
where the effective SINR, $\tilde{\rho}_k$, is given by \eqref{sjnr_imperfect_HardImpair}, shown at the top of the previous page.
\end{lem}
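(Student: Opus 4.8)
The plan is to reuse the use-and-then-forget (channel hardening) bound that underlies Lemma~\ref{Lemma1} in Appendix~\ref{proof of Lemma1}, now carrying the distortion terms through the computation. Writing $\hat{x}_k=\hat{\matr{h}}_k^H\matr{y}_d$ with $\hat{\matr{h}}_k=\matr{A}_k\matr{y}_k$ and treating the deterministic coefficient of $x_k$ as the effective channel, the achievable SE keeps the form $(1-\tau/T)\log_2(1+\tilde{\rho}_k)$, where $\tilde{\rho}_k$ is the desired signal power divided by the variance of the interference-plus-distortion-plus-noise, i.e., $\text{E}\{|\hat{x}_k|^2\}$ minus the desired power. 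For the numerator, substituting $\matr{y}_k$ from \eqref{proj_est_hardware_imp} and using that the distortion scalars $\varepsilon_i^k,\varepsilon_w^k$ and the noise are zero-mean leaves $\text{E}\{\hat{\matr{h}}_k^H\matr{h}_k\}=\sqrt{\tau p_t(1-\kappa_u^2)}\,\tr(\matr{A}_k^H\matr{R}_k)$; scaling by the effective data power $p_d(1-\kappa_u^2)$ produces the numerator $\tau p_tp_d(1-\kappa_u^2)^2|\tr(\matr{A}_k^H\matr{R}_k)|^2$ of \eqref{sjnr_imperfect_HardImpair}.

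The core of the proof is the denominator. For each interfering source the contributed power is a second moment $\text{E}\{|\hat{\matr{h}}_k^H\matr{h}_i|^2\}$ (or $\text{E}\{\|\hat{\matr{h}}_k\|^2\}$ for the noise) weighted by its effective power, and the amplitude reduction recombines with the additive distortion, $p_d(1-\kappa_u^2)+p_d\kappa_u^2=p_d$ and likewise $q_d$ for the jammer, so each source retains its full power. The feature absent from Lemma~\ref{Lemma1} is that, through the terms $\varepsilon_i^k\matr{h}_i$ and $\varepsilon_w^k\matr{h}_w$ in \eqref{proj_est_hardware_imp}, the estimate $\hat{\matr{h}}_k$ is now correlated with \emph{every} channel present in $\matr{y}_d$. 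I would evaluate each $\text{E}\{|\hat{\matr{h}}_k^H\matr{h}_i|^2\}$ by splitting $\matr{y}_k$ into the part proportional to $\matr{h}_i$ and an independent remainder, then applying the Gaussian identity $\text{E}\{|\matr{h}^H\matr{M}\matr{h}|^2\}=|\tr(\matr{M}\matr{R})|^2+\tr(\matr{M}\matr{R}\matr{M}^H\matr{R})$. The key mechanism is that the $\tr(\matr{M}\matr{R}\matr{M}^H\matr{R})$ piece from the correlated part cancels the matching cross term from the remainder, so each source separates cleanly into a coherent part $\propto|\tr(\matr{A}_k^H\matr{R}_i)|^2$ and a non-coherent part $\tr(\matr{R}_i\matr{A}_k\tilde{\matr{B}}_k\matr{A}_k^H)$.

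Summing the non-coherent pieces over users, jammer and noise and using $\matr{Q}=p_d\sum_i\matr{R}_i+\matr{S}_w^{(d)}+\matr{I}_M$ from \eqref{cov_mat} collapses them into $\tr(\matr{A}_k^H\matr{Q}\matr{A}_k\tilde{\matr{B}}_k)$. The coherent pieces give $p_tp_d\kappa_u^2\sum_i|\tr(\matr{A}_k^H\matr{R}_i)|^2$ from the user distortions and $q_tq_d\kappa_w^2|\tr(\matr{A}_k^H\matr{R}_w)|^2$ from the jammer distortion (for $k\neq m$), while subtracting the hardened desired power from the $i=k$ second moment leaves the residual $\tau p_tp_d\kappa_u^2(1-\kappa_u^2)|\tr(\matr{A}_k^H\matr{R}_k)|^2$; this reproduces the second branch of \eqref{sjnr_imperfect_HardImpair}. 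For $k=m$ the only change is that the jammer's intentional pilot $\sqrt{\tau q_t(1-\kappa_w^2)}\matr{h}_w$ enters $\matr{y}_m$, so its effective pilot weight becomes $q_t(\tau-(\tau-1)\kappa_w^2)$ in place of $q_t\kappa_w^2$ and $\tilde{\matr{B}}_m$ replaces $\tilde{\matr{B}}_k$, yielding the first branch.

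I expect the delicate step to be the self-term $i=k$ (and, for $k=m$, the analogous jammer term), where the pilot weight $\sqrt{\tau p_t(1-\kappa_u^2)}$ and the distortion $\varepsilon_k^k$ act along the same channel. There one must combine the second moment $p_t(\tau-(\tau-1)\kappa_u^2)$ of the total coefficient with the subtracted hardened power $\tau p_tp_d(1-\kappa_u^2)^2$ and check that the leftover collapses exactly to $p_tp_d\kappa_u^2(1+\tau(1-\kappa_u^2))|\tr(\matr{A}_k^H\matr{R}_k)|^2$, of which $p_tp_d\kappa_u^2|\tr(\matr{A}_k^H\matr{R}_k)|^2$ joins the $\sum_i$ above and the rest is the residual term. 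Confirming that all $\tr(\matr{M}\matr{R}\matr{M}^H\matr{R})$ cross terms cancel so that the coherent and non-coherent contributions truly decouple is the heart of the bookkeeping.
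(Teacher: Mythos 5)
Your proposal is correct and follows essentially the same route as the paper: the paper's proof of Lemma \ref{Lemma2} is a one-line reference back to the use-and-then-forget computation of Lemma \ref{Lemma1} with the $(1-\kappa_u^2)$ scaling, and you have simply carried out in full the moment calculations that this reference implies. Your bookkeeping checks out, including the delicate self-term identity $\tau-(\tau-1)\kappa_u^2-\tau(1-\kappa_u^2)^2=\kappa_u^2\left(1+\tau(1-\kappa_u^2)\right)$ that produces the residual pilot-contamination term and the extra coherent distortion terms in \eqref{sjnr_imperfect_HardImpair}.
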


\begin{proof}
The proof is similar to the proof of Lemma \ref{Lemma1}, except that the numerator and second term in the denominator of the SINR in \eqref{eff_sjnr_kth_user} are multiplied by $(1-\kappa_u^2)$.
\end{proof}
While the hardware impairments at the users reduce the gain in the numerator of the SINR in \eqref{sjnr_imperfect_HardImpair} by a factor $(1-\kappa_u^2)^2$ and also unwanted pilot-contaminating interferers (i.e., the second and third terms in the denominator), the hardware impairment at the jammer also creates pilot-contaminating interference that imposes additional reduction in the SINR. The reduction of the SINR through the non-ideal hardware of the jammer is more for the $m$th user in comparison with the other users, since $\tau -\left(\tau -1\right)\kappa_w^2 \geq \kappa_w^2$ due to the bound of $\kappa_w \leq 1$.

To derive the MS estimator in the case of non-ideal hardware, \eqref{sjnr_imperfect_HardImpair} can first be rewritten as

\setcounter{equation}{34}
\begin{equation}\label{sjnr_imperfect_HardImpair2}
\tilde{\rho}_{k} = 
\frac{\tau p_tp_d\left(1-\kappa_u^2 \right)^2 \matr{a}_k^H \matr{r}_k \matr{r}_k^H \matr{a}_k}{\matr{a}_k^H \tilde{\matr{C}}_k \matr{a}_k} ,
\end{equation}
where $\tilde{\matr{C}}_k$ is the $M^2 \times M^2$ matrix defined by

\begin{equation}\label{Matrix_C_HardImpair}
\tilde{\matr{C}}_k \triangleq \\
\begin{dcases*}
\begin{aligned}
&\tilde{\matr{B}}_{m}^T \otimes \matr{Q} + \tilde{\matr{s}}_w^{\left(t\right)} \matr{s}_w^{\left(d\right)^H} + p_tp_d \kappa_u^2 \sum_{\substack{i=1}}^{K} \matr{r}_i \matr{r}_i^H + \\
&~~~~~~~~~~~~~~~~\tau p_tp_d\kappa_u^2 \left(1-\kappa_u^2 \right) \matr{r}_m \matr{r}_m^H , && k=m , \\
&\tilde{\matr{B}}_{k}^T \otimes \matr{Q} + \breve{\matr{s}}_w^{\left(t\right)} \matr{s}_w^{\left(d\right)^H} + p_tp_d \kappa_u^2 \sum_{\substack{i=1}}^{K} \matr{r}_i \matr{r}_i^H + \\
&~~~~~~~~~~~~~~~~~\tau p_tp_d\kappa_u^2 \left(1-\kappa_u^2 \right) \matr{r}_k \matr{r}_k^H , && k \neq m ,
\end{aligned}
\end{dcases*}
\end{equation}
where $\tilde{\matr{s}}_w^{\left(t\right)} = \vect (\tilde{\matr{S}}_w^{\left(t\right)}) \in \mathbb{C}^{M^2 \times 1}$ and $\breve{\matr{s}}_w^{\left(t\right)} = \vect (\breve{\matr{S}}_w^{\left(t\right)}) \in \mathbb{C}^{M^2 \times 1}$. As a result, the MS estimator can be readily extended as $\matr{a}_k^{\star} = \tilde{\matr{C}}_k^{-1} \matr{r}_k$ in this case. Consequently, the optimal effective SINR is computed as $\tilde{\rho}_k^{\star} = \tau p_tp_d(1-\kappa_u^2)^2 \matr{r}_k^H \tilde{\matr{C}}_k^{-1} \matr{r}_k$. We can apply the MS approach using the matrix $\tilde{\matr{C}}_k$ in \eqref{Matrix_C_HardImpair} and during the sample estimate of the matrices $\tilde{\matr{S}}_w^{\left(t\right)}$ and $\breve{\matr{S}}_w^{\left(t\right)}$, the observations of the pilot signals $\matr{y}_k = \matr{Y}_t \boldsymbol{\phi}_k^*$ are accomplished from $\matr{Y}_t$ in \eqref{received_matrix_HarImpair} over $N$ coherence blocks.

Finally, in the case of non-ideal hardware, the optimal power allocation between the pilot and data phases from the system's and the jammer's point of view is analyzed similar to the case of ideal hardware.

\begin{figure*}[!t]
\setcounter{equation}{38}
\begin{equation}\label{eff_sjnr_kth_user}
\rho_k = \frac{p_d \left|\text{E} \left\{\hat{\matr{h}}_k^H \matr{h}_{k} \right\} \right|^2}{p_d \sum_{i=1}^{K} \text{E} \left\{\left|\hat{\matr{h}}_k^H \matr{h}_{i} \right|^2 \right\} - p_d \left|\text{E} \left\{\hat{\matr{h}}_k^H \matr{h}_{k} \right\} \right|^2 + q_d \text{E} \left\{\left|\hat{\matr{h}}_k^H \matr{h}_{w} \right|^2 \right\} + \text{E} \left\{\left\|\hat{\matr{h}}_k \right\|^2 \right\}} 
\end{equation}
\hrulefill
\end{figure*}

\begin{figure*}
\centerline{\subfigure[User 1]{\includegraphics[width=2.35in]{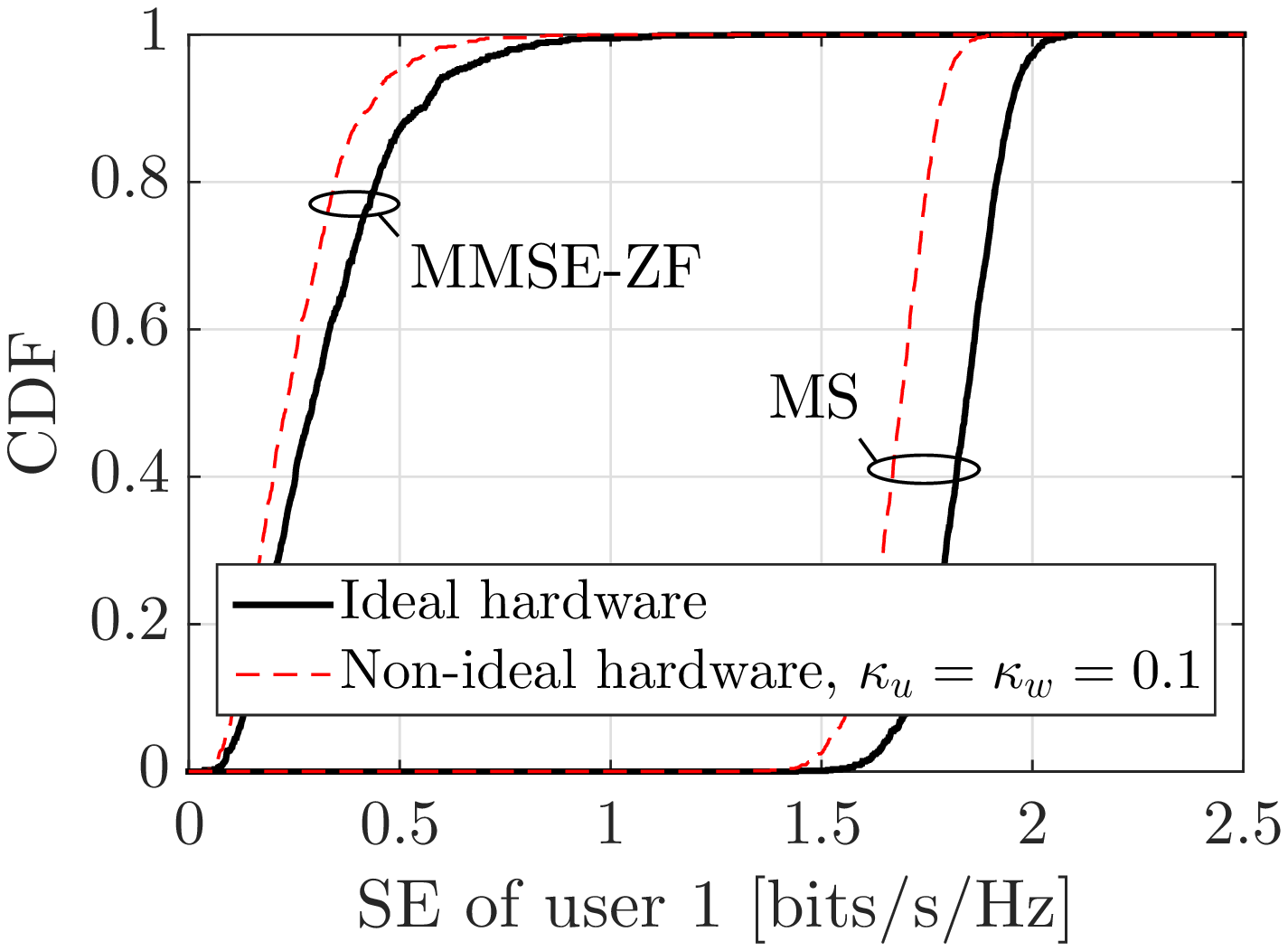}
\label{fig0_1}}
\hfil
\subfigure[User 2]{\includegraphics[width=2.35in]{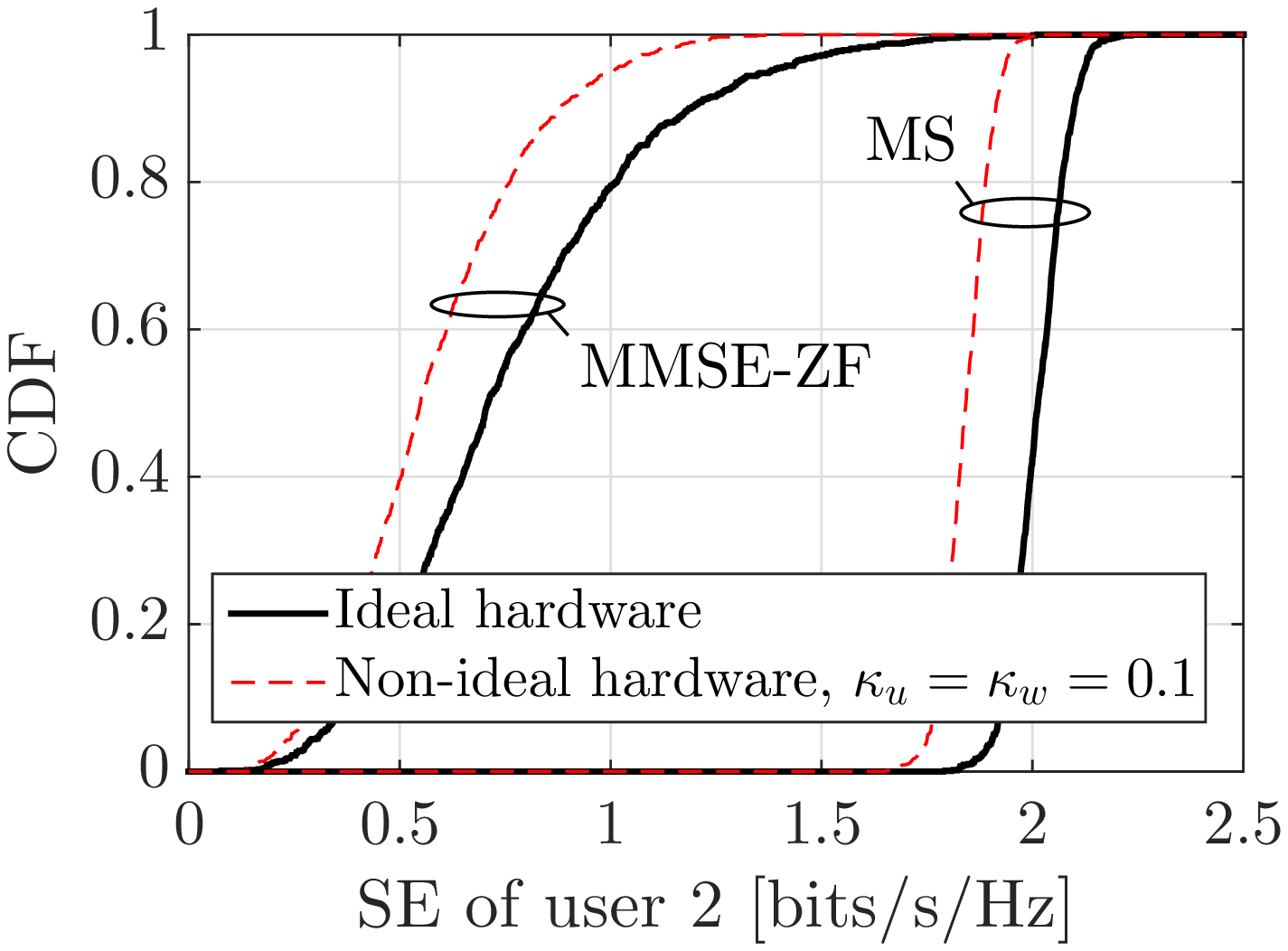}
\label{fig0_2}}
\hfil
\subfigure[User 3 (attacked user)]{\includegraphics[width=2.35in]{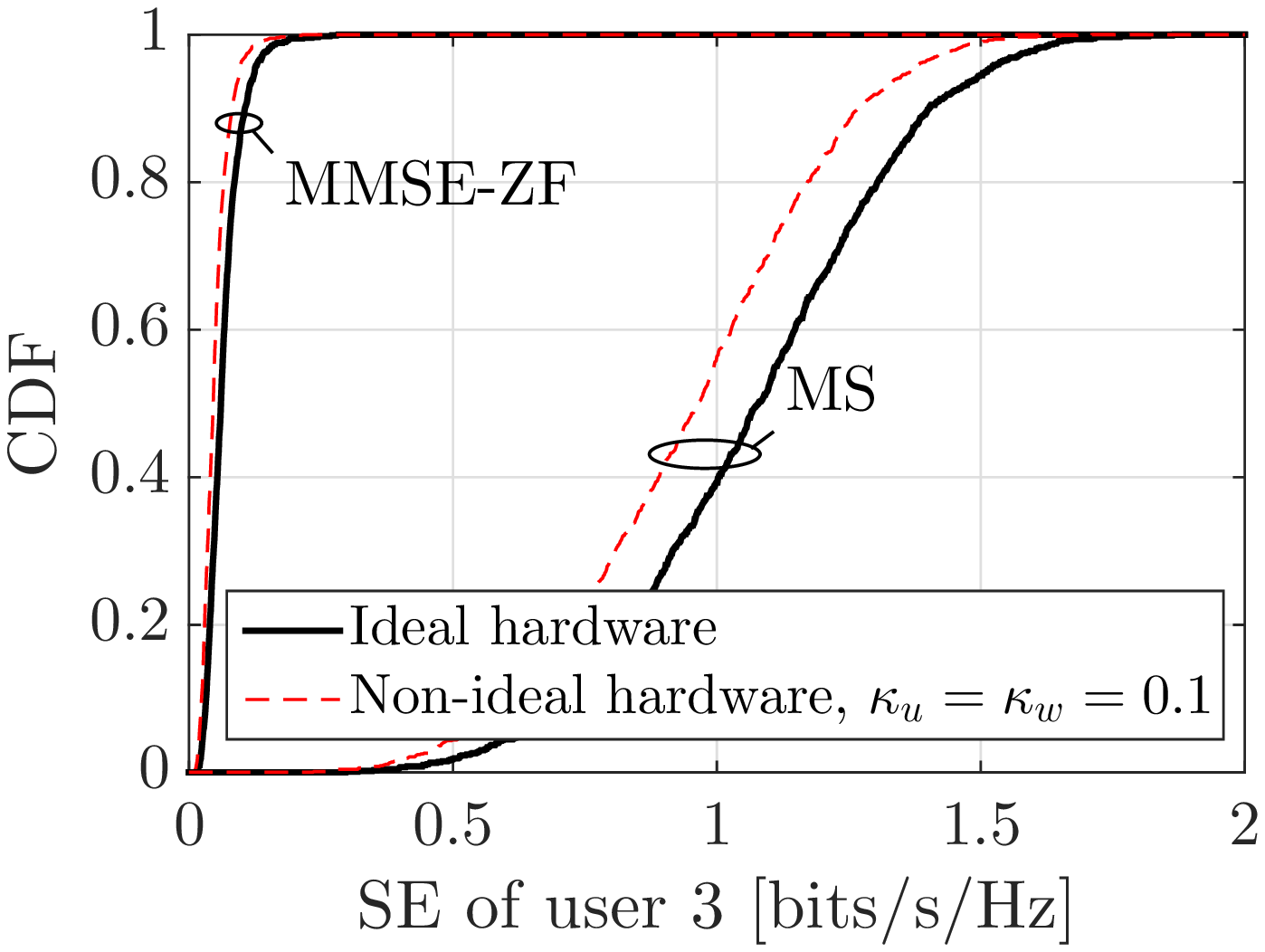}
\label{fig0_3}}}
\caption{CDF of SE for each user computed on $1000$ different random locations of the jammer and fading variations over the array.}
\label{fig0}
\end{figure*}

\section{Numerical Experiments}
We consider a square cell of $250~\m \times 250~\m$ composed of one BS located at the cell center and $K$ users that are randomly distributed in the cell with a minimum distance of $25~\m$ from the BS. We assume a coherence block of $T=200$ samples with a pilot length $\tau=K$. For uncorrelated Rayleigh fading channels, it was shown in the literature \cite{JamResTai}, \cite{JamSupHoss} that the massive MIMO uplink is robust against jamming attacks with a MMSE-like estimator and a ZF-like detector. Hence, here we compare the performance of the MS approach with a framework including a MMSE estimator and a ZF detector (with label ``MMSE-ZF'') as a benchmark. To compute the SE for the MMSE-ZF, we utilize Monte-Carlo simulations over at least $1000$ independent runs. In the simulations, we include the noise variance in what we call the large-scale fading coefficients, not in the transmit powers, thus the transmit powers can be interpreted as the signal-to-noise ratios and their units are expressed in $\dB$. In the scenario with non-ideal hardware, the impairment levels for the users and the jammer are set to $\kappa_u=\kappa_w=0.1$.

For all the users and the jammer, the covariance matrices are computed as a combination of the Gaussian local scattering model and log-normal large-scale fading variations over the array \cite[Ch. 2]{EmilBook}:

\setcounter{equation}{36}
\begin{equation}\label{cov_mat_model}
\left[\matr{R}_l \right]_{mn} = \frac{\beta_l 10^{\left(f^l_{m} + f^l_{n} \right)/20}}{\sqrt{2 \pi}\sigma_{\varphi}} \int e^{j\pi \left(m-1 \right) \left(n-m\right) \sin \left(\theta_l + \varphi \right)-\frac{\varphi}{2\sigma^2_{\varphi}}} \ddd \varphi , \\
l \in \left\{1, 2, \ldots, K \right\} \cup \left\{w \right\} .
\end{equation}
The parameters in \eqref{cov_mat_model} are as follows:
\begin{align}
\beta_l & \text{: Average large-scale fading (depends on the distance} \nonumber \\[-0.45em]
& \text{~~from the BS and includes the noise variance).} \nonumber \\
\theta_l & \text{: Angle-of-arrival (AoA).} \nonumber \\
f^l_{m}, f^l_{n} & \text{: Independent large-scale fading variations distributed} \nonumber \\[-0.45em]
& \text{~~as $\mathcal{N} \left(0, \sigma^2 \right)$. The standard deviation is denoted by $\sigma$.} \nonumber\\[-0.45em]
\sigma_{\varphi} & \text{: Angular standard deviation ($\sigma_{\varphi} = 10^{\circ}$).} \nonumber
\end{align}

To perform the simulations of Fig. \ref{fig0}, Fig. \ref{fig4}, and Fig. \ref{fig6}, we use a realization of the introduced network containing a $64$-antenna BS and $3$ users with locations $\left(-57.6~\m,112.2~\m\right)$, $\left(44.4~\m,-83.1~\m\right)$, and $\left(103.8~\m,-22.2~\m\right)$. One jammer also exists in the network that attacks the $3^{\text{rd}}$ user as a target. We set the standard deviation of the large-scale fading variations to $\sigma=4$.

Fig. \ref{fig0} illustrates the cumulative distribution function (CDF) of the SE for each user computed on $1000$ different random locations of the jammer and fading variations over the array. The curves are plotted for $p_t=p_d=-5 \dB$ and $q_t=q_d=15 \dB$ in both scenarios with ideal and non-ideal hardware. The comparison shows that the MS approach generally performs better than the MMSE-ZF specially in the case of the non-ideal hardware at the transmitters. Moreover, the robustness level of the system during the detection of the $3^{\text{rd}}$ user's symbols (i.e., the target user's symbols) under the jamming attacks is dominantly improved using the MS approach compared to the MMSE-ZF.

\begin{figure*}
\centerline{\subfigure[$q_t=q_d=15 \dB$ and $\mathcal{P}=5 \dB$]{\includegraphics[width=2.55in]{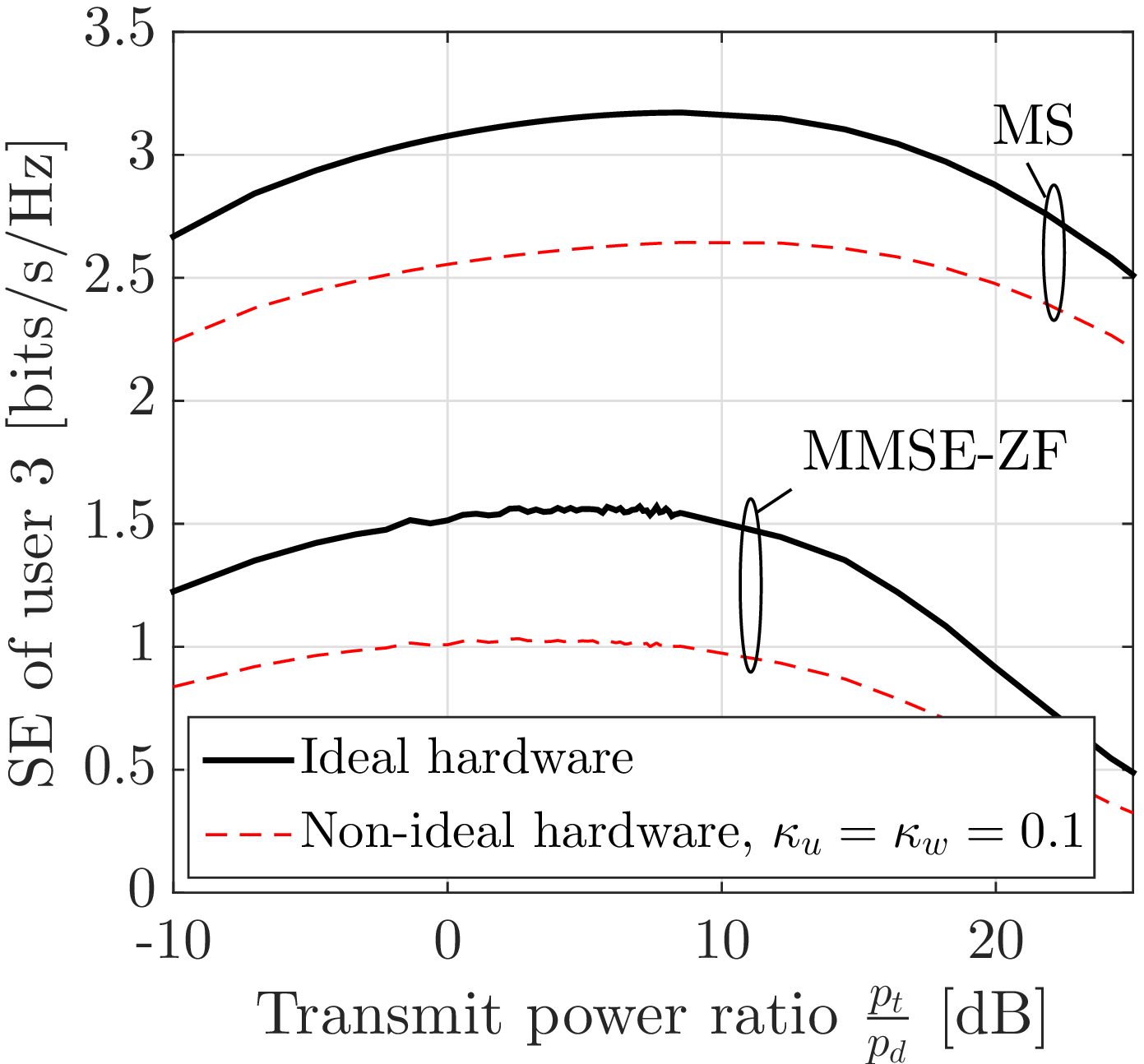}
\label{fig4_1}}
\hfil
\subfigure[$p_t=p_d=-5 \dB$ and $\mathcal{P}_w=20 \dB$]{\includegraphics[width=2.55in]{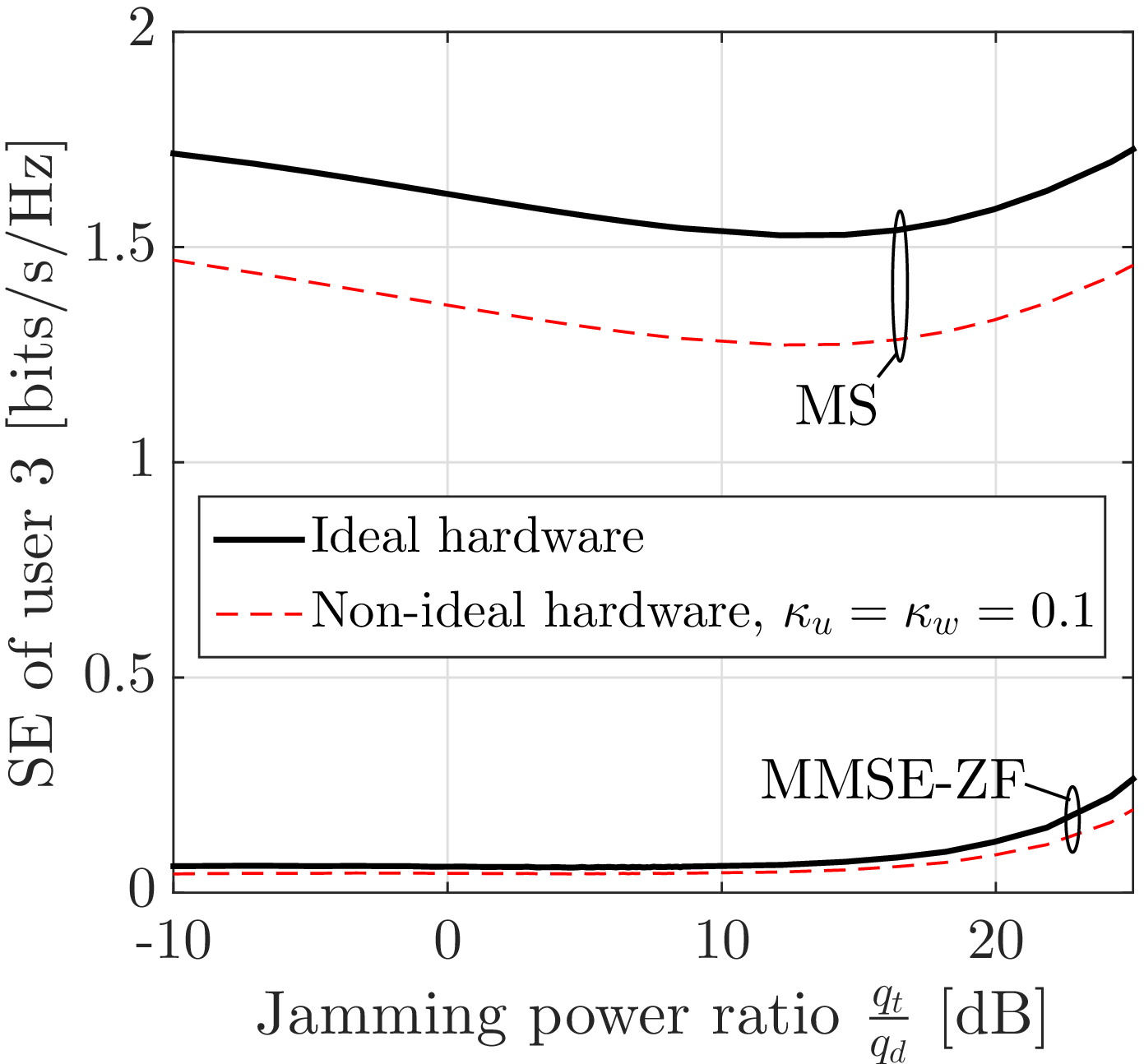}\label{fig4_2}}}
\caption{SE of user $3$ (target user) versus (a) transmit power ratio $\frac{p_t}{p_d}$, (b) jamming power ratio $\frac{q_t}{q_d}$.}
\label{fig4}
\end{figure*}

We evaluate the impact of power allocations on the performance of the system in Fig. \ref{fig4}. In these simulations, the location of the jammer is near to the target user in order to create a channel correlation matrix $\matr{R}_w$ similar to the channel correlation matrix $\matr{R}_3$. It intuitively seems that the jammer can more easily spread its jamming signals on the subspace of the target user by choosing a location near to the target user. For instance, here we assume that the jammer's location is $\left(107.9~\m,-17.1~\m\right)$. Fig. \ref{fig4_1} illustrates how the SE of the target user depends on the transmit power ratio $\frac{p_t}{p_d}$ for $q_t=q_d=15 \dB$ and $\mathcal{P}=5 \dB$ in the scenarios with the ideal and non-ideal hardware, while in Fig. \ref{fig4_2}, the SE of the target user is depicted versus the jamming power ratio $\frac{q_t}{q_d}$ for $p_t=p_d=-5 \dB$ and $\mathcal{P}_w=20 \dB$ at the same scenarios. we can see that in Fig. \ref{fig4_1}, the SEs have the peak values in certain power ratios $\frac{p_t}{p_d}$. In other words, the users should balance their power budgets between the pilot and data signals to get the best achievable SEs. Similarly in Fig. \ref{fig4_2}, the SEs have the minimum values in certain ratios $\frac{q_t}{q_d}$. Therefore, the jammer can divide its power budget between the pilot and data phases to minimize the achievable SE of the target user. Furthermore, comparing Fig. \ref{fig4_1} and Fig. \ref{fig4_2} indicates that the performance of the MMSE-ZF dramatically reduces when the jammer's power budget increases compared to the users' power budget, such that its SE in Fig. \ref{fig4_2} varies near the zero; whereas the MS approach is more robust against jamming attacks even with strong jamming powers.

Finally, we evaluate the accuracies of the sample estimations for $\hat{\matr{S}}_w^{\left(t\right)}$ in \eqref{EstVar_s_w} and $\hat{\matr{S}}_w^{\left(d\right)}$ in \eqref{Est_Rec_Jam_Pow1}. We consider $p_t=p_d=-10 \dB$ and the location of the jammer is identical to that location in the simulations of Fig. \ref{fig4}. We plot the SE of the MS approach for all $3$ users versus the jamming powers $q_t=q_d$ in Fig. \ref{fig6} for ideal hardware and two cases that are perfect knowledge about $\matr{S}_w^{\left(t\right)}$ and $\matr{S}_w^{\left(d\right)}$, and the sample estimates of $\hat{\matr{S}}_w^{\left(t\right)}$ and $\hat{\matr{S}}_w^{\left(d\right)}$ over $N=1000$ coherence blocks. It demonstrates that the SEs resulted from the estimations in \eqref{EstVar_s_w} and \eqref{Est_Rec_Jam_Pow1} nearly behave the actual SEs for all $3$ users. Accordingly, it is expected that the achievable SEs can be realized without the perfect knowledge of the jamming channel statistics. Furthermore, we can see in Fig. \ref{fig6} that the SEs become non-zero in the high jamming power regime. This is because we can estimate the jammer's interference better for a high power jammer, and therefore remove its impact spatially by exploiting our proposed robust framework.

Another realization of the introduced network with $5$ users are utilized to perform the next simulation to show the effect of the number of the BS antennas on the performance of the MS approach. The locations of the users are $\left(-59.1~\m,-91.6~\m\right)$, $\left(32.6~\m,-113.3~\m\right)$, $\left(25.1~\m,116.9~\m\right)$, $\left(94.0~\m,106.3~\m\right)$, and $\left(33.2~\m,1.6~\m\right)$. One jammer located in $\left(37.0~\m,6.2~\m\right)$ attacks the $5^{\text{th}}$ user as a target. We use \eqref{cov_mat_model} to carry out the simulation of the correlation matrices except that here we ignore the fading variations over the array, i.e., $f_m^l=f_n^l=0$. In Fig. \ref{fig8}, the SEs of the users $1$, $3$, and $5$ are illustrated with respect to the number of BS antennas for $\mathcal{P}=5~\dB$ and $\mathcal{P}_w=25~\dB$. Since the SEs of the users $1$, $2$, $3$, and $4$ (i.e., the users that are not the target of the jammer) are similar, we indicate the SEs of the users $1$, and $3$ as examples. The users and the jammer transmit the signals in the pilot and data phases with the transmit powers allocated by \eqref{opt_sol_leg_sys} and \eqref{opt_sol_jam}, respectively. It is clear that the performance of the proposed framework improves for the large antenna arrays. This improvement is substantial in the target user (i.e., the $5^{\text{th}}$ user) such that the SE of the MS approach grows by increasing the number of antennas, while for the MMSE-ZF, the SE remains nearly zero.

\begin{figure}
\centering
\includegraphics[width=2.55in]
{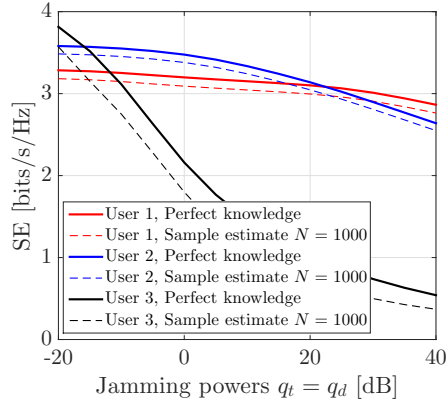}
\caption{SEs of the MS approach versus the jamming powers $q_t=q_d$ for the ideal hardware and $p_t=p_d=-10 \dB$.}
\label{fig6}
\end{figure}

\begin{figure*}
\centerline{\subfigure[User 1]{\includegraphics[width=2.35in]{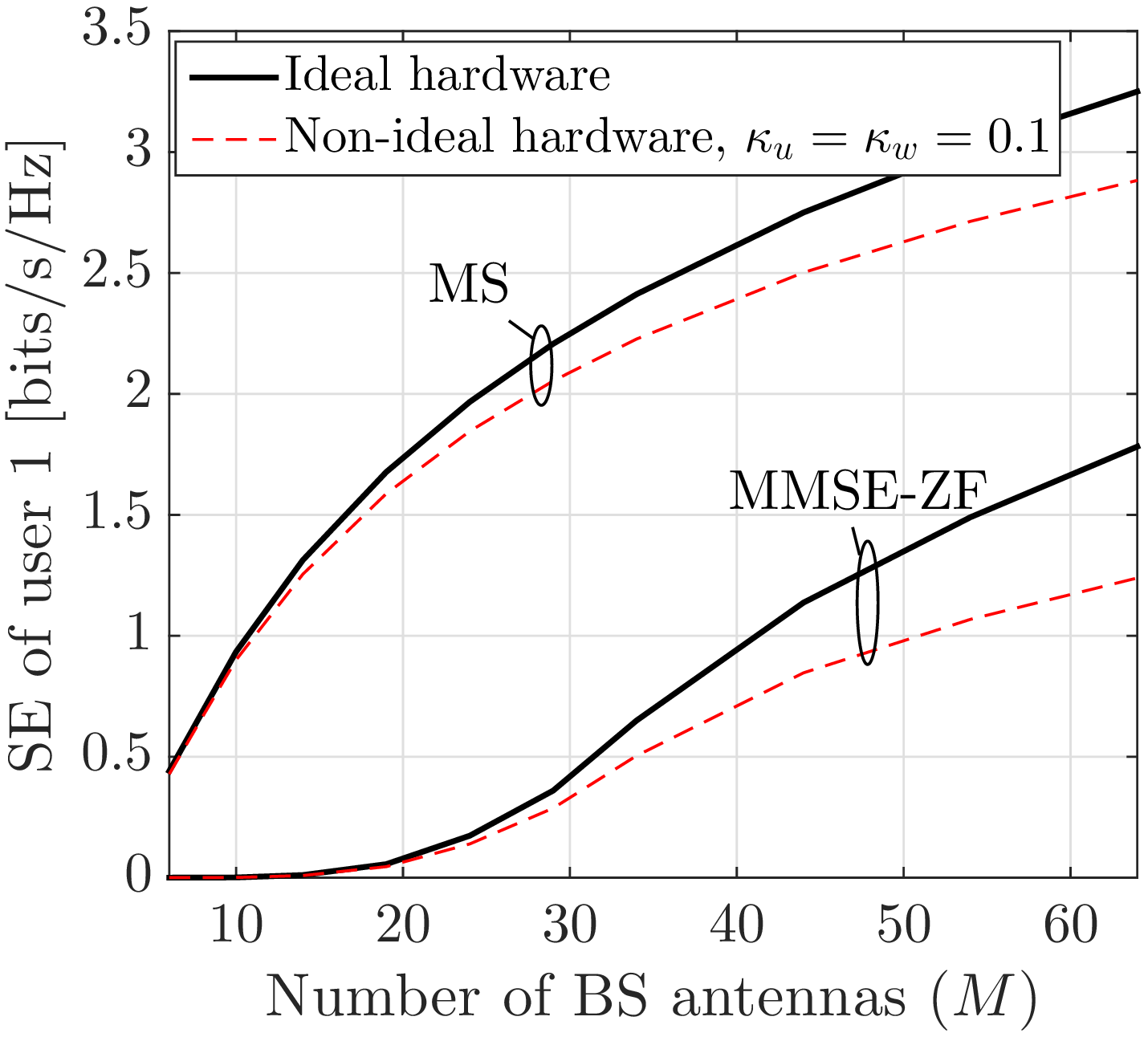}
\label{fig8_1}}
\hfil
\subfigure[User 3]{\includegraphics[width=2.35in]{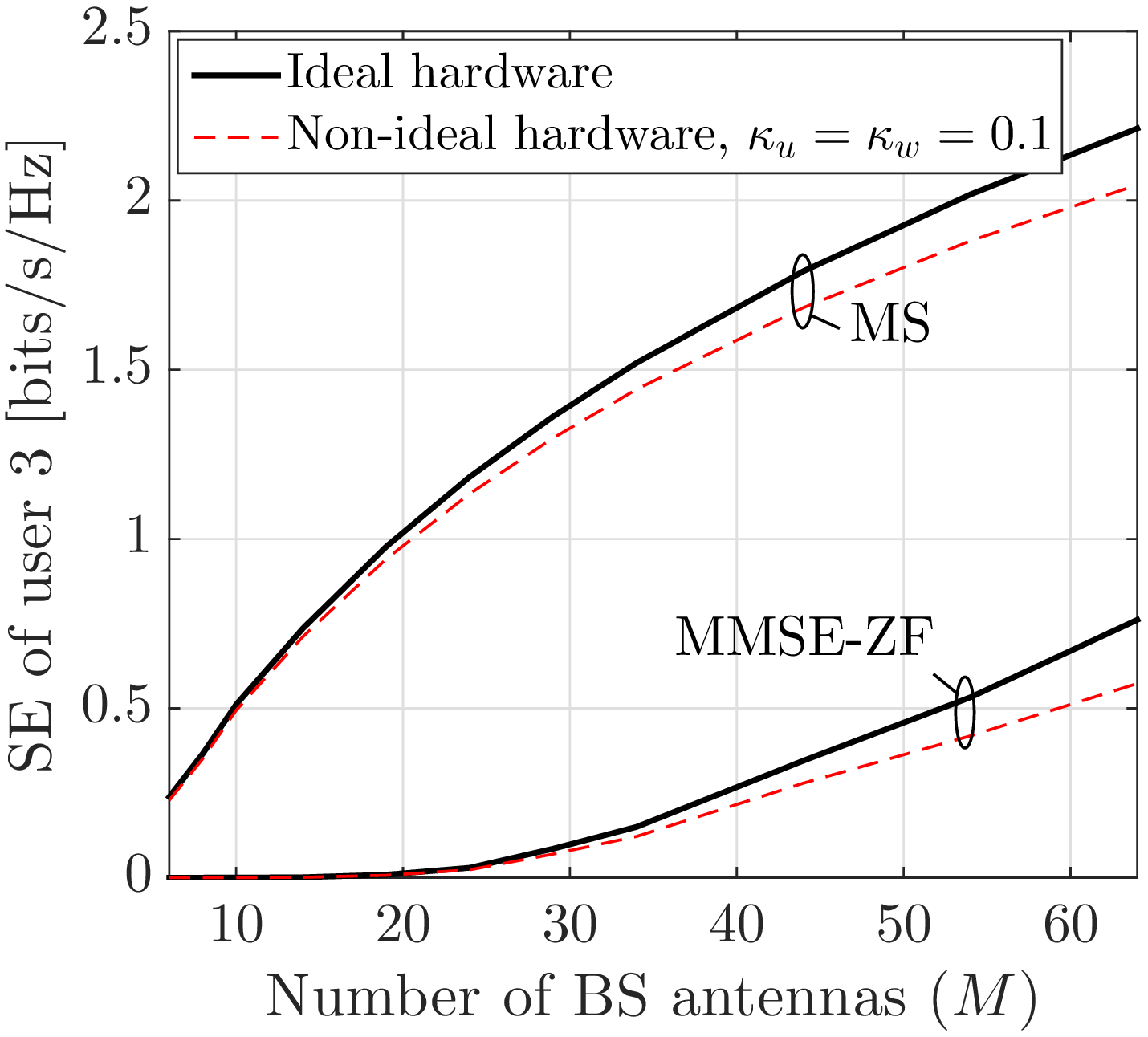}
\label{fig8_3}}
\hfil
\subfigure[User 5 (attacked user)]{\includegraphics[width=2.35in]{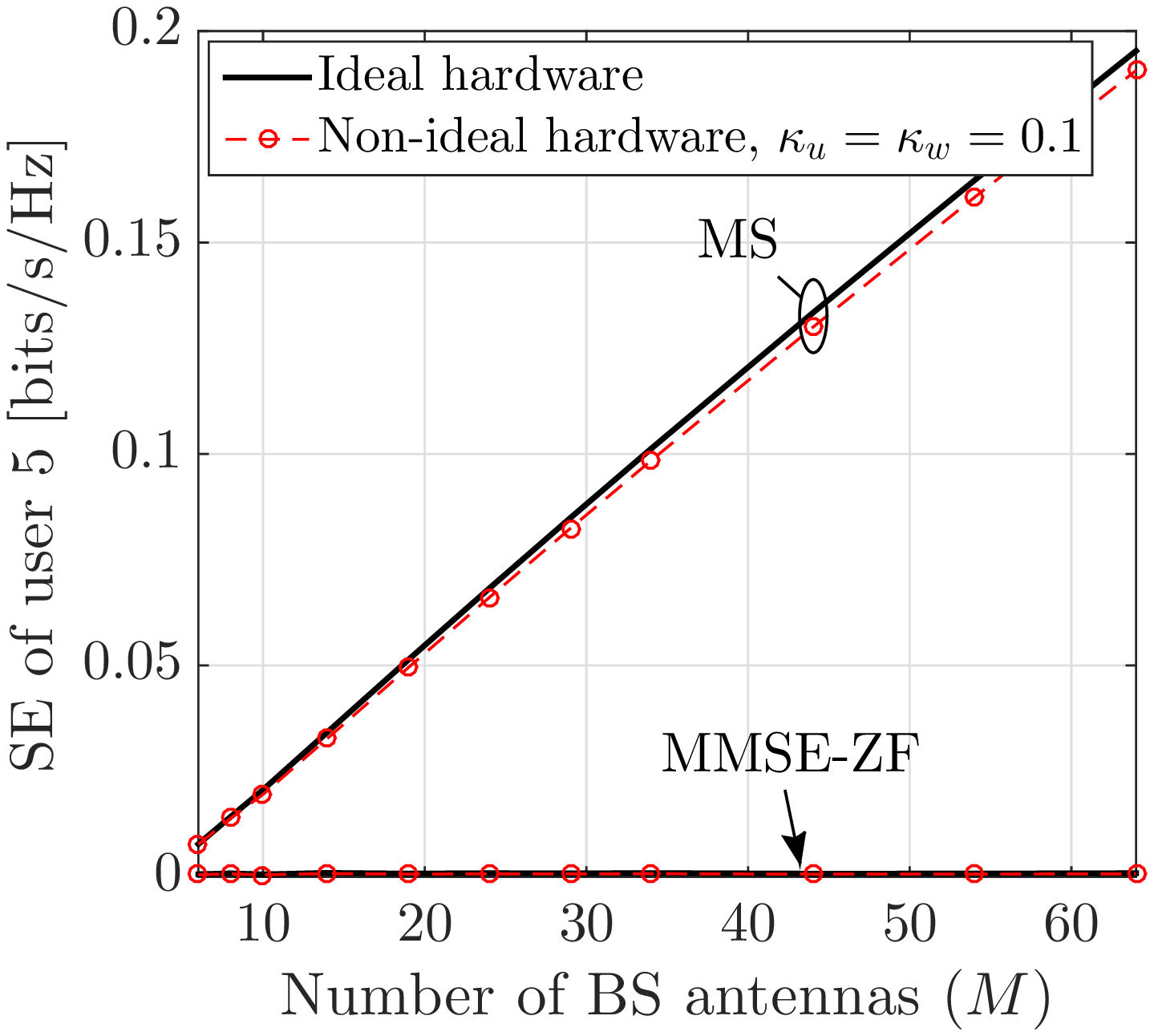}
\label{fig8_5}}}
\caption{SEs of the users $1$, $3$, and $5$ with respect to the number of the BS antennas $M$ for $\mathcal{P}=5 \dB$ and $\mathcal{P}_w=25 \dB$.}
\label{fig8}
\end{figure*}

\section{Conclusion}
This paper has investigated the protection of the massive MIMO uplink with the spatially correlated Rayleigh fading against jamming attacks. A new jamming-robust framework has been proposed consisting of a new MS approach as an optimal channel estimator with respect to the SE of the system. A practical algorithm has been also suggested to acquire the jamming statistical knowledge needed for implementation of the proposed framework. We have shown that a massive MIMO system equipped with the proposed framework is more robust against jamming compared to the MMSE-ZF framework. We have also shown that the power allocations at the users can improve the performance of the proposed framework regardless of the jamming power optimization. Finally, the performance of the proposed framework has been mathematically and numerically studied when the hardware impairments exist at the users and the jammer. The analyses show that the proposed framework performs well in the case of non-ideal hardware for both the attacked user and other users that are not the target of the jammer.

\appendix
\subsection{Proof of Lemma \ref{Lemma1}} \label{proof of Lemma1}
An achievable SE of the $k$th user is
\setcounter{equation}{37}
\begin{equation}\label{ach_SE_kth_user}
\mathcal{S}_k = \left(1-\frac{\tau}{T} \right) \log_2 \left(1 + \rho_k \right) ,
\end{equation}
with the effective SINR of the $k$th user, $\rho_k$, is obtained as in \eqref{eff_sjnr_kth_user}, shown at the top of the previous page. The expectations in \eqref{eff_sjnr_kth_user} are computed in closed form as follows:
\setcounter{equation}{39}
\begin{equation}\label{Nominator}
\E \left\{\hat{\matr{h}}_k^H \matr{h}_{k} \right\} = \sqrt{\tau p_t} \tr \left(\matr{A}_k^H \matr{R}_k \right) ,
\end{equation}
\begin{equation}\label{second_order_moment_users}
\E \left\{\left|\hat{\matr{h}}_{k}^H \matr{h}_{i} \right|^2 \right\} =
\begin{dcases*}
\begin{aligned}
&\tau p_t \left|\tr \left(\matr{A}_k^H \matr{R}_k \right)\right|^2 + \\
&~~~~~~~~~\tr \left( \matr{A}_k^H \matr{R}_{k} \matr{A}_k \matr{B}_k \right) , ~ && k = i , \\
&\tr \left( \matr{A}_k^H \matr{R}_{i} \matr{A}_k \matr{B}_k \right) , ~ && k \neq i ,
\end{aligned}
\end{dcases*}
\end{equation}
\begin{equation}\label{second_order_moment_jam2}
\E \left\{\left|\hat{\matr{h}}_{k}^H \matr{h}_{w} \right|^2 \right\} =
\begin{dcases*}
\begin{aligned}
&\tau q_t \left|\tr \left(\matr{A}_m^H \matr{R}_w \right)\right|^2 + \\
&~~~~~~~~~\tr \left( \matr{A}_m^H \matr{R}_{w} \matr{A}_m \matr{B}_m \right) , ~ && k=m , \\
&\tr \left( \matr{A}_k^H \matr{R}_{w} \matr{A}_k \matr{B}_k \right) , ~ && k \neq m ,
\end{aligned}
\end{dcases*}
\end{equation}
\begin{equation}\label{rec_noise}
\E \left\{\left\|\hat{\matr{h}}_k\right\|^2 \right\} = \tr \left( \matr{A}_k^H \matr{A}_k \matr{B}_k \right) .
\end{equation}
By plugging \eqref{Nominator}$-$\eqref{rec_noise} into \eqref{eff_sjnr_kth_user}, and utilizing $\matr{Q} = \matr{I}_M + \matr{S}_w^{\left(d\right)} + p_d \sum_{i=1}^{K} \matr{R}_i$, we finally obtain \eqref{erg_cap_imperfect}.

\subsection{Proof of Theorem \ref{theo1}} \label{proof of Theorem1}
According to $\matr{a}_m \triangleq \vect (\matr{A}_m) \in \mathbb{C}^{M^2 \times 1}$ and the vectorized form of the jamming channel statistic $\matr{r}_w = \vect \left(\matr{R}_w\right) \in \mathbb{C}^{M^2 \times 1}$ and $\matr{r}_m$, the effective SINR in \eqref{sjnr_imperfect} is rewritten as
\begin{equation}\label{sjnr_rewrite}
{\rho}_{m} = \frac{\tau p_t p_d \matr{a}_m^H \matr{r}_m \matr{r}_m^H \matr{a}_m}{\matr{a}_m^H \matr{s}_w^{\left(t\right)} \matr{s}_w^{\left(d\right)^H} \matr{a}_m + \matr{a}_m^H \left(\matr{B}_{m}^T \otimes \matr{Q} \right) \matr{a}_m} ,
\end{equation}
for $k=m$, where $\matr{s}_w^{\left(t\right)}$, $\matr{s}_w^{\left(d\right)}$ are the vectorized form of $\matr{S}_w^{\left(t\right)} = \tau q_t \matr{R}_w$ and $\matr{S}_w^{\left(d\right)} = q_d \matr{R}_w$, and we use the Kronecker product to express matrix multiplication as
\begin{align}\label{vectorized_property}
\tr \left(\matr{A}_m^H \matr{Q} \matr{A}_m \matr{B}_{m} \right) &= \matr{a}_m^H \vect \left(\matr{Q} \matr{A}_m \matr{B}_{m} \right) \nonumber \\
&= \matr{a}_m^H \left(\matr{B}_{m}^T \otimes \matr{Q} \right) \matr{a}_m .
\end{align}
By defining the new matrix $\matr{C}_m \triangleq \matr{B}_{m}^T \otimes \matr{Q} + \matr{s}_w^{\left(t\right)} \matr{s}_w^{\left(d\right)^H}$ and making the change of variable $\bar{\matr{a}}_m = \matr{C}_m^{\frac{1}{2}} \matr{a}_m$, we have
\begin{equation}\label{sjnr_rewrite2}
{\rho}_{m} = \frac{\bar{\matr{a}}_m^H \matr{C}_m^{-\frac{1}{2}} \matr{r}_m \matr{r}_m^H \matr{C}_m^{-\frac{1}{2}} \bar{\matr{a}}_m}{\bar{\matr{a}}_m^H \bar{\matr{a}}_m} = \frac{\left|\bar{\matr{a}}_m^H \matr{C}_m^{-\frac{1}{2}} \matr{r}_m \right|^2}{\left\|\bar{\matr{a}}_m \right\|^2} .
\end{equation}
According to Cauchy-Schwarz inequality for the numerator, ${\rho}_{m}$ is maximized when $\bar{\matr{a}}_m^{\star} = \matr{C}_m^{-\frac{1}{2}} \matr{r}_m$, which leads to $\matr{a}_m^{\star} = \matr{C}_m^{-1} \matr{r}_m$.

For $k \neq m$, the analysis is similar to $k=m$, except that the first term in the denominator of the SINR in \eqref{sjnr_rewrite} is ignored.

\subsection{Proof of Corollary \ref{cor1}} \label{proof of Corollary1}
Regarding \eqref{opt_sjnr_r2}, $\rho_m^{\star}$ is approximated to
\begin{equation}\label{opt_sjnr_r2_1}
\rho_m^{\star} = \tau p_t p_d \matr{r}_m^H \left(\matr{S}_{w}^{\left(t\right)^T} \otimes \matr{S}_{w}^{\left(d\right)} + \matr{s}_w^{\left(t\right)} \matr{s}_w^{\left(d\right)^H} \right)^{-1} \matr{r}_m ,
\end{equation}
for $\mathcal{P}_w \gg \mathcal{P} \geq 1$. This approximation is exact as $\mathcal{P}_w \rightarrow \infty$. We can see that in \eqref{opt_sjnr_r2_1}, ${\rho}_m^{\star}$ is a linear function of the multiplication $p_t p_d$. Hence, $\mathfrak{L}$ equals to
\begin{equation}\label{optimization user2}
\mathfrak{L}':
\begin{dcases*}
\begin{aligned}
& \underset{p_t, p_d}{\text{maximize}}
&& p_t p_d \\
& \text{subject to} && \tau p_t+\left(T-\tau\right)p_d = \mathcal{P}T, \\
& && p_t \geq 0, p_d \geq 0. \\
\end{aligned}
\end{dcases*}
\end{equation}
Utilizing the first constraint in $\mathfrak{L}'$, the objective function reduces to
\begin{equation}\label{obj_func}
p_t p_d = \frac{\mathcal{P}T}{T-\tau}p_t-\frac{\tau}{T-\tau}p_t^2 .
\end{equation}
By differentiating \eqref{obj_func} with respect to $p_t$ and equating it to zero, we obtain
\begin{equation}\label{diff_zero}
\frac{\mathcal{P}T}{T-\tau}-\frac{2\tau}{T-\tau}p_t = 0 ,
\end{equation}
which has the optimal values
\begin{equation}\label{opt_value}
p_t^{\text{o}} = \frac{\mathcal{P}T}{2\tau} ,
\end{equation}
and
\begin{equation}\label{opt_value2}
p_d^{\text{o}} = \frac{\mathcal{P}T}{2\left(T-\tau \right)} .
\end{equation}
We note that the optimal values in \eqref{opt_value} and \eqref{opt_value2} satisfy the second and the third constraints in $\mathfrak{L}'$.

\subsection{Proof of Theorem \ref{theo3}} \label{proof of Theorem3}
We denote the ratio of the jamming power budget allocated to the pilot phase by $\zeta$ ($0 \leq \zeta \leq 1$) so that
\begin{equation}\label{PowBudg_Frac2}
q_t = \frac{\zeta \mathcal{P}_wT}{\tau} ~ \text{and} ~ q_d = \frac{\left(1-\zeta \right) \mathcal{P}_wT}{\left(T-\tau \right)} .
\end{equation}
Since the jamming transmit powers in \eqref{PowBudg_Frac2} satisfy all constraints in \eqref{optimization user1}, $\mathfrak{J}$ is equivalent to
\begin{equation}\label{optimization user3}
\mathfrak{J}':
\begin{dcases*}
\begin{aligned}
& \underset{\zeta}{\text{minimize}}
&& \rho_m^{\star}\left(\zeta \right) \\
& \text{subject to} && 0 \leq \zeta \leq 1, \\
\end{aligned}
\end{dcases*}
\end{equation}
by substituting \eqref{PowBudg_Frac2} into the objective function in \eqref{optimization user1}. It is obvious that the constraint in $\mathfrak{J}'$ results in a convex feasible set. Furthermore, we use the second-order conditions \cite{BookBoyd} to indicate that the objective function is convex. The second derivative of $\rho_m^{\star}\left(\zeta \right)$ is equal to
\begin{equation}\label{SecDerivative}
\frac{\partial^2\rho_m^{\star}\left(\zeta \right)}{\partial\zeta^2} = \tau p_t p_d \matr{r}_m^H \matr{U}_m\left(\zeta \right) \matr{r}_m ,
\end{equation}
where
\begin{equation}\label{Mat_U_m}
\matr{U}_m\left(\zeta \right) \triangleq \\
\boldsymbol{\Gamma}_m^{-1}\left(\zeta \right)\left(2\boldsymbol{\Lambda}_m\left(\zeta \right) \boldsymbol{\Gamma}_m^{-1}\left(\zeta \right) \boldsymbol{\Lambda}_m\left(\zeta \right)-\boldsymbol{\Upsilon}_m\left(\zeta \right) \right)\boldsymbol{\Gamma}_m^{-1}\left(\zeta \right) .
\end{equation}
The matrices $\boldsymbol{\Gamma}_m\left(\zeta \right)$, $\boldsymbol{\Lambda}_m\left(\zeta \right)$ and $\boldsymbol{\Upsilon}_m\left(\zeta \right)$ are given by
\begin{equation}\label{CovMat1}
\boldsymbol{\Gamma}_m\left(\zeta \right) \triangleq \frac{\zeta \left(1-\zeta \right) \mathcal{P}_w^2T^2}{(T-\tau)}\boldsymbol{\Gamma}_w+\zeta \mathcal{P}_wT\boldsymbol{\Lambda}_w+\frac{\left(1-\zeta \right) \mathcal{P}_wT}{(T-\tau)}\boldsymbol{\Psi}_m \\
+\left(\tau p_t\matr{R}_m+\matr{I}_M \right)^T \otimes \left(p_d \sum_{i=1}^{K}\matr{R}_i+\matr{I}_M\right) ,
\end{equation}
\begin{equation}\label{CovMat2}
\boldsymbol{\Lambda}_m\left(\zeta \right) \triangleq \frac{\partial \boldsymbol{\Gamma}_m\left(\zeta \right)}{\partial \zeta} = \frac{\left(1-2\zeta \right)\mathcal{P}_w^2T^2}{(T-\tau)}\boldsymbol{\Gamma}_w \\
+\mathcal{P}_wT\boldsymbol{\Lambda}_w-\frac{\mathcal{P}_wT}{(T-\tau)}\boldsymbol{\Psi}_m ,
\end{equation}
\begin{equation}\label{CovMat3}
\boldsymbol{\Upsilon}_m\left(\zeta \right) \triangleq \frac{\partial^2 \boldsymbol{\Gamma}_m\left(\zeta \right)}{\partial \zeta^2} = \frac{-2\mathcal{P}_w^2T^2}{(T-\tau)}\boldsymbol{\Gamma}_w ,
\end{equation}
where $\boldsymbol{\Gamma}_w \triangleq \matr{R}_w^T \otimes \matr{R}_w+\matr{r}_w \matr{r}_w^H$, $\boldsymbol{\Lambda}_w \triangleq \matr{R}_w^T \otimes (p_d \sum_{i=1}^{K}\matr{R}_i+\matr{I}_M)$, and $\boldsymbol{\Psi}_m \triangleq (\tau p_t\matr{R}_m+\matr{I}_M)^T \otimes \matr{R}_w$. Regarding \eqref{CovMat1}, $\boldsymbol{\Gamma}_m\left(\zeta\right)$ is composed of the channel covariance matrices that are positive definite. Hence, $\boldsymbol{\Gamma}_m\left(\zeta\right)$ as well as $\boldsymbol{\Gamma}_m^{-1}\left(\zeta\right)$ are the positive definite matrices and $\boldsymbol{\Upsilon}_m\left(\zeta\right)$ is a negative definite matrix (or equivalently $-\boldsymbol{\Upsilon}_m\left(\zeta\right)$ is a positive definite matrix). Finally, following from \eqref{Mat_U_m}, we can conclude that $\matr{U}_m\left(\zeta\right)$ is a positive definite matrix and
\begin{equation}\label{SecDerivative2}
\frac{\partial^2\rho_m^{\star}\left(\zeta \right)}{\partial\zeta^2} > 0 .
\end{equation}

\subsection{Proof of Corollary \ref{cor2}} \label{proof of Corollary2}
We can rewrite \eqref{opt_sjnr_r2_1} as
\begin{align}\label{opt_sjnr_r2_2}
\rho_m^{\star} &= \tau p_t p_d \matr{r}_m^H \left(\tau q_t \matr{R}_{w}^{T} \otimes q_d\matr{R}_{w} + \tau q_tq_d \matr{r}_w \matr{r}_w^{H} \right)^{-1} \matr{r}_m \nonumber \\
&= \frac{p_tp_d}{q_tq_d} \matr{r}_m^H \left(\matr{R}_{w}^{T} \otimes \matr{R}_{w} + \matr{r}_w \matr{r}_w^{H} \right)^{-1} \matr{r}_m .
\end{align}
$\rho_m^{\star}$ in \eqref{opt_sjnr_r2_2} is a decreasing function of the multiplication $q_tq_d$. Therefore, $\mathfrak{J}$ is
\begin{equation}\label{optimization user4}
\mathfrak{J}'':
\begin{dcases*}
\begin{aligned}
& \underset{q_t, q_d}{\text{maximize}}
&& q_t q_d \\
& \text{subject to} && \tau q_t+\left(T-\tau\right)q_d = \mathcal{P}_wT, \\
& && q_t \geq 0, q_d \geq 0 . \\
\end{aligned}
\end{dcases*}
\end{equation}
The continuation of the proof follows similar lines as the proof of Corollary \ref{cor1}.

\end{document}